\author[1a]{Adriano Chialastri}
\author[2b]{Ko Sanders}
\affil[1]{SISSA, Mathematics Area, Via Bonomea 265, 34136 Trieste, Italy}
 \affil[a]{\href{mailto:achialas@sissa.it}{achialas@sissa.it}}
\affil[2]{Leibniz Universität Hannover, Institut für Analysis, Welfengarten 1,
30167 Hannover, Germany}
\affil[b]{\href{mailto:ko.sanders@math.uni-hannover.de}{ko.sanders@math.uni-hannover.de}}
\newtheorem{theorem}{Theorem}[section]
\newtheorem{corollary}[theorem]{Corollary}
\newtheorem{proposition}[theorem]{Proposition}
\newtheorem{lemma}[theorem]{Lemma}
\newtheorem{remark}[theorem]{Remark}
\def\Rbb{\mathbb{R}}
\def\Cbb{\mathbb{C}}
\def\H{\mathcal{H}}
\def\K{\mathcal{K}}
\def\N{\mathcal{N}}
\def\O{\mathcal{O}}
\def\one{\mathds{1}}
\newcommand{\cstar}{\(C^*\)}
\def\theta{\vartheta}
\def\phi{\varphi}
\def\epsilon{\varepsilon}
\def\artanh{\operatorname{artanh}}
\title{\textbf{Quasi-equivalence of Gaussian states and energy estimates for functions of modular Hamiltonians}}
\date{14 October 2025}
\numberwithin{equation}{section}
\begin{document}
	
\maketitle


\begin{abstract}
To compare two Gaussian states of the Weyl-CCR algebra of a free scalar QFT we study three closely related perspectives: (i) quasi-equivalence of the GNS-representations, (ii) differences of the total energy (on some Cauchy surface), and (iii) differences between functions of the modular Hamiltonians. (For perspective (ii) we will only consider real linear free scalar quantum fields on ultrastatic spacetimes.) These three perspectives are known to be related qualitatively, due to work of Araki and Yamagami, Verch and Longo. Our aim is to investigate quantitative relations, including in particular estimates of differences between functions of modular Hamiltonians in terms of energy differences.
E.g., for a suitable class of perturbations of the Minkowski vacuum state of a  
massive free scalar field, which have a positive energy density and a finite total energy \(E\) on some inertial time slice, the modular Hamiltonian \(K\) satisfies \(\Big\|\frac{1}{\cosh\left(\frac{K}{2}\right)}\Big\|_{\text{HS}}^2\le 8\frac{E}{m}\).
\end{abstract}

\section{Introduction}\label{section:Introduction}

\cstar-algebras often admit many representations. In order to express the idea that the choice of representation does not matter, one may consider representations that are pairwise quasi-equivalent. This condition, which is weaker than unitary equivalence, ensures that every state that can be described by a density operator in one representation can also be represented by a density operator in another representation. E.g. in quantum field theories, especially in curved spacetime, a theory typically admits many representations that are not unitarily equivalent, but that are, at least locally, quasi-equivalent, so that the choice of a Hilbert space representation is irrelevant for describing local physics \cite{witten2022does,verch1994local}.

For a bosonic quantum system satisfying canonical commutation relations (CCR), the earliest investigation into the equivalence of Gaussian states of the Weyl algebra appears to be Shale's work on unitary equivalence of pure states~\cite{Shale1962}. In the most general situation, allowing (degenerate) pre-symplectic forms and Gaussian states that need not be pure, necessary and sufficient conditions for quasi-equivalence were established by Araki and Yamagami~\cite{araki1982quasi}. These general conditions were verified for the class of Gaussian Hadamard states of a linear real scalar quantum field on bounded regions of any globally hyperbolic Lorentzian manifold by Verch~\cite{verch1994local}. It is gratifying to see that the Hadamard condition ensures local quasi-equivalence, because this condition has been identified as physically important and these states have finite and smoothly varying expectation values for the renormalised stress tensor. This class includes ground and thermal (KMS) states on stationary spacetimes \cite{SahlmannVerch2000} and is preserved under scattering against smooth perturbations in the metric \cite{FSW1978}. Because a smooth energy density on a bounded region integrates to a finite total energy, one expects that a finite change in energy is insufficient to violate quasi-equivalence.

We will review the conditions of Araki-Yamagami in Section~\ref{sec:perturbations}, together with a recent development by Longo~\cite{longo2022modular} (see also \cite{conti2024quasi}), who found equivalent conditions in a slightly more limited setting, formulated first in terms of polarisation operators and then in terms of modular Hamiltonians. The connection to modular theory is of great interest, because it is closely related to concepts from quantum information theory, as generalised to quantum field theory. In particular, the relative entropy between two states can be expressed in terms of relative modular operators, using results of Araki~\cite{Araki1975,Araki1977}.

The relation between quasi-equivalence and modular theory raises the suggestion that relative entropies (or other quantities that can naturally be formulated in terms of modular theory) are closely connected to quasi-equivalence of states~\cite{longo2025bekenstein,hollands2025bekenstein}.
Hence, for free scalar fields, they should also be closely related to energy differences. 
This relation between relative entropies and energy differences is
similar in nature to the Bekenstein bound \cite{PhysRevD.23.287}.

The purpose of this paper is to establish quantitative relations between the works of Araki and Yamagami~\cite{araki1982quasi}, Verch~\cite{verch1994local} and Longo~\cite{longo2022modular} that connect the perspectives of quasi-equivalence, finite energy differences and the comparison of modular operators. In particular, we supplement the existing results by explicit estimates, which allow us to estimate the difference of suitable functions of the modular Hamiltonians in two Gaussian states in terms of the difference of their energies. (For a rather different approach to estimates between modular operators and energy see \cite{MuchPasseggerVerch2022}.) The estimates that we present rely on inequalites by Powers and St{\o}rmer~\cite{powers1970free}, van Hemmen and Ando~\cite{van1980inequality}, Kittaneh \cite{kittaneh1987inequalitiesII} and Kosaki \cite{kosaki1998arithmetic}.

To avoid complications in our analysis we will focus on easy applications, rather than full generality.

We have organised our paper as follows. In Section~\ref{section:notations} we set the stage by introducing basic notations and reviewing key constructions relating to the Weyl algebra, Gaussian states, standard subspaces and modular theory. In Section~\ref{sec:perturbations} we review the quasi-equivalence results of Araki-Yamagami~\cite{araki1982quasi} and of Longo~\cite{longo2022modular} and we present our explicit estimates. Applications to the linear real scalar quantum field will be presented in Section \ref{sec:scalarfields}, establishing inequalities between differences of functions of modular operators and differences in energy. This also establishes a relation to the work of Verch~\cite{verch1994local} on quasi-equivalence of Hadamard states. 

\section{Notations and basic constructions}\label{section:notations}

In this section we will introduce some notations and general constructions regarding (pre-)symplectic spaces, one-particle structures, standard subspaces and modular theory, that will be used throughout the paper. By a \emph{pre-symplectic space} \((D,\sigma)\) we mean a real vector space \(D\) with an anti-symmetric bilinear form \(\sigma\) defined on it. We speak of a \emph{symplectic space} when \(\sigma\) is non-degenerate, i.e. when \(\sigma(f,g)=0\) for all \(g\in D\) implies \(f=0\).

To \((D,\sigma)\) we can associate a\emph{Weyl algebra} \(\mathcal{A}\left[ D,\sigma\right] \), generated by linearly independent unitaries \(W\left( f\right) \), \(f\in D\), satisfying \(W(f)^*=W(-f)\) and the Weyl relations
\begin{align}
	W\left( f\right) W\left( g\right)&=e^{-\frac{i}{2}\sigma\left(f,g\right) } W\left( f+g\right)\label{eqn:Weylrelations}
\end{align}
for all \(f,g\in D\). A \emph{Gaussian state} on \(\mathcal{A}\left[ D,\sigma\right] \) is a linear functional \(\omega:\mathcal{A}\left[ D,\sigma\right]\to\Cbb\) such that
\begin{align}
\omega\left( W\left( f\right) \right) &=\exp\left( -\frac{1}{4}\mu\left( f,f\right) \right)\,,\label{eqn:Gaussianstate}
\end{align}
where \(\mu\) is a real inner product on \(D\) such that the two-point distribution
\begin{align}
\omega_2(f,g)&:=\frac12(\mu(f,g)+i\sigma(f,g))\,,\label{def:omega2}
\end{align}
extended linearly to the standard complexification of \(D\), 
defines a Hermitean inner product by \(\omega_2(\bar{f},g)\). Gaussian states can be characterised entirely in terms of their one-particle structure \cite{Kay1978}.

A \emph{one-particle structure} for a pre-symplectic space \((D,\sigma)\) is a real-linear map \(\kappa:D\to\H\) into a complex Hilbert space \(\H\) with inner product \( (,) \) such that
(i) \(\kappa(D)+i\kappa(D)\) is dense in \(\H\) and (ii) \(\Im(\kappa(f),\kappa(g))=\sigma(f,g)\) for all \(f,g\in D\). Given a one-particle structure we can define an inner product \(\mu\) on \(D\) by setting
\begin{align}
\mu(f,g)&:=\Re(\kappa(f),\kappa(g))\,.\label{def:mu}     
\end{align}
We note that \(\mu\) and \((,)\) determine the same norm \(\|.\|\) on \(D\) and that
\begin{align}
|\sigma(f,g)|&\le\|f\|\cdot\|g\|\label{eqn:mudominates}
\end{align}
for all \(f,g\in D\), because of
\(|\sigma(f,g)|\le|(\kappa(f),\kappa(g))|\) and the Cauchy-Schwarz inequality.

Conversely, given any inner product \(\mu\) on \(D\) satisfying \eqref{eqn:mudominates} there exists a one-particle structure \(\kappa\) such that \eqref{def:mu} holds. This one-particle structure is unique up to unitary equivalence. We will now briefly recall the construction of \(\kappa\), because it allows us to introduce some further useful notations.

By a \emph{polarisation operator} \(R\) on a real Hilbert space \(\K_{\Rbb}\) with inner product \(\langle,\rangle\) we will mean an anti-selfadjoint operator, \(R^*=-R\), with \(\|R\|\le1\). In the case of a one-particle structure we obtain such a polarisation operator as follows. We can complete \(D\) in the inner product \(\mu\) to a real Hilbert space, which we will denote by \(\K_{\Rbb,\mu}\). On this Hilbert space there is a unique, bounded operator \(R_{\mu}\) defined by
\begin{align}
\mu(f,R_{\mu}g)&:=\sigma(f,g)\label{eqn:defR}
\end{align}
for all \(f,g\in D\). As the notation suggests, \(R_{\mu}\) is a polarisation operator.

For any real Hilbert space \(\K_{\Rbb}\) there is a standard complexification \(\K=\K_{\Rbb}\oplus i\K_{\Rbb}\) with multiplication by \(i\) given by the matrix \(\begin{pmatrix}0&-\one\\ \one&0\end{pmatrix}\), where the real inner product is extended to a Hermitian inner product. \(\K\) also carries a canonical complex conjugation \(\Gamma\) such that \(\Gamma(f+ig)=f-ig\) when \(f,g\in\K_{\Rbb}\). Any real-linear operator \(A\) on \(\K_{\Rbb}\) can be extended to a complex-linear operator on \(\K\) that we will denote by the same symbol. We then have \(\Gamma A\Gamma=A\) and we note that the extension does not increase the operator norm. In particular, if \(R\) is a polarisation operator on \(\K_{\Rbb}\), then its extension is also
anti-selfadjoint with \(\|R\|\le1\) on \(\K\) and we will write \(\Sigma:=iR\), which is a self-adjoint operator with \(\|\Sigma\|\le1\) and \(\Gamma\Sigma\Gamma=-\Sigma\). (When the real Hilbert space and the polarisation operator are derived from an inner product \(\mu\) on a symplectic space \((D,\sigma)\) satisfying \eqref{eqn:mudominates}, we will put a subscript \(\mu\) on the Hilbert spaces and the operators.)

Given a polarisation operator \(R_{\mu}\) on a real Hilbert space \(\K_{\Rbb,\mu}\) we let \(\N_{\mu}:=\mathrm{ker}(\one+\Sigma_{\mu})\) and \(\H_{\mu}:=\N_{\mu}^{\perp}\) in \(\K_{\mu}\). Furthermore, we define the real-linear map \(\kappa:D\to\H_{\mu}\) by
\begin{align}
\kappa(f)&:=\sqrt{\one+\Sigma_{\mu}}f\,.\label{def:kappa}
\end{align}
Then, \(\kappa\) is a one-particle structure for \((D,\sigma)\) satisfying \eqref{def:mu}.

\medskip

We now turn our attention to modular theory. Our presentation essentially follows that of~\cite[Sec. 2]{longo2008real}. A \emph{standard subspace} of a complex Hilbert space \(\H\) is a closed, real-linear subspace \(H\subset\H\) such that \(H+iH\subset\H\) is dense and \(H\cap iH=\{0\}\). Given a standard subspace we can define the \emph{Tomita operator}
\begin{align}
S_H:\ &H+iH\to H+iH\label{def:Tomita}\\
&\ v+iw\mapsto v-iw\notag
\end{align}
which is well-defined on a dense domain and closed. It has a polar decomposition \(S_H=J_H\Delta_H^{\frac12}\), where \(\Delta_H\ge0\) is the \emph{modular operator} and \(J_H\), the \emph{modular conjugation}, is anti-linear. From \(S_H^2=\one\) we see that \(\Delta_H^{\frac12}\) is invertible with \(J_H\Delta_H^{\frac12}J_H=\Delta_H^{-\frac12}\). In particular, \(\Delta_H>0\) is strictly positive. Also, \(J_H\) is invertible with \(J_H^{-1}=J_H^*\) and hence \(S_H=\Delta_H^{-\frac12}J_H^*\) and \(S_H^*=J_H\Delta_H^{-\frac12}\). From \(\Delta_H=S_H^*S_H=J_H\Delta_H^{-1}J_H^*\) and spectral calculus one then finds \(J_H\Delta_H^{\frac12}J_H^*=\Delta_H^{-\frac12}=J_H\Delta_H^{\frac12}J_H\), which 
implies \(J_H^*=J_H\).

The \emph{modular group} \(t\mapsto \Delta_H^{-it}\) is defined in terms of the modular operator and generated by the \emph{modular Hamiltonian}
\begin{equation}
K_H:=-\log(\Delta_H)\,.\label{def:modularHamiltonian}
\end{equation}
We call a standard subspace \(H\subset\H\) \emph{factorial} iff 1 is not in the point spectrum of \(\Delta_H\). Equivalently, \(H\cap H'=\{0\}\), where
\begin{equation}
H':=\{v\in\H\mid \Im(v,w)=0\,  ,\ \forall w\in H\}    
\end{equation}
is the \emph{symplectic complement}.

\medskip

We can relate modular theory to polarisation operators as follows. Given a standard subspace \(H\subset\H\) we can view \(\K_{\Rbb}=H\) as a real Hilbert space with inner product \(\langle,\rangle=\Re(,)\). Furthermore, we can define a polarisation operator \(R\) on \(H\) by \(\langle v,Rw\rangle=\Im(v,w)\) for all \(v,w\in \K_{\Rbb}\). We let \(\K\) denote the standard complexification of \(\K_{\Rbb}\) as before and we note that for all \(v,w\in H\) we have \(\|v+iw\|^2_{\H}=\|\sqrt{\one+\Sigma}\,(v+iw)\|^2_{\K}\) by a short computation. Because \(H\) is a standard subspace, both sides must be non-zero, unless \(v=w=0\). This means that \(\one+\Sigma\) is strictly positive and consequently \(\N=\{0\}\), so we can identify \(\H\simeq\K\) using the unitary map
\begin{align}
U:\ &\H\to\K: v+iw\mapsto\sqrt{\one+\Sigma}\,(v+iw)\label{eqn:defU}
\end{align}
for all \(v,w\in H\). Note that \(-1\) is not in the point spectrum of \(\Sigma\) and neither is \(+1\), because \(\Sigma=-\Gamma\Sigma\Gamma\). Equivalently we have
\begin{align}
\one+R^2&=\one-R^*R>0\,.\label{Radditionalcondition}
\end{align}

We can express the modular operator in terms of \(\Sigma\) using the following computation for all \(v,w\in H\):
\begin{align}
\|\Delta_H^{\frac12}(v+iw)\|_{\H}&=\|S_H(v+iw)\|_{\H}\notag\\
&=\|v-iw\|_{\H}\notag\\
&=\|\sqrt{\one+\Sigma}(v-iw)\|_{\K}\notag\\
&=\|\Gamma\sqrt{\one+\Sigma}\Gamma(v+iw)\|_{\K}\notag\\
&=\|\sqrt{\one-\Sigma}(v+iw)\|_{\K}\notag\\
&=\left\|\sqrt{\frac{\one-\Sigma}{\one+\Sigma}}U(v+iw)\right\|_{\K}\notag\\
&=\left\|U^*\sqrt{\frac{\one-\Sigma}{\one+\Sigma}}U(v+iw)\right\|_{\H}\notag
\end{align}
which implies \(\Delta_H^{\frac12}=U^*\sqrt{\frac{\one-\Sigma}{\one+\Sigma}}U\) by the uniqueness of the polar decomposition of \(\Delta_H^{\frac12}\). Hence we find (see also \cite[Proposition 2.4]{longo2022modular} or \cite[Proposition 2.1]{conti2024quasi})
\begin{equation}\label{deltaRidentification}
\begin{aligned}
U\Delta_HU^*&=\frac{\one-\Sigma}{\one+\Sigma}=\frac{\one-iR}{\one+iR}\,,\\
U^*\Sigma U&=
-i\,U^*RU=
\frac{\one-\Delta_H}{\one+\Delta_H}\,.
\end{aligned}
\end{equation}

\medskip

When \(\kappa:D\to\H\) is a one-particle structure for a pre-symplectic space \((D,\sigma)\), then the closed range \(H:=\overline{\kappa(D)}\) has a dense complexification \(H+iH\) in \(\H\). \(H\) is a standard subspace iff \(R_{\mu}\) satisfies \eqref{Radditionalcondition}. In this case, we will use a subscript \(\mu\) also on the modular operator.

\section{Perturbed states and quasi-equivalence}\label{sec:perturbations}

On the pre-symplectic space \((D,\sigma)\) we now fix a Gaussian reference state, indicated by the real inner product \(\mu_0\) on \(D\). We will consider perturbations \(\mu_{\delta}\) of \(\mu_0\) and consider the quasi-equivalence of the corresponding Gaussian states, following~\cite{araki1982quasi} and~\cite{longo2022modular}. We supplement these investigations with some explicit estimates that will be useful for our applications in Section \ref{sec:scalarfields}.

Hilbert spaces and other constructions from \(\mu_0\) as in Section \ref{section:notations} will be indicated by a subscript 0. The perturbations that we consider are then of the following form:
\begin{align}
\mu_\delta\left( f,g\right)&:=
\mu_0\left(f,\left(\one+\delta\right)g\right)
\,,\label{deltascalarproduct}
\end{align}
where we assume that \(\delta\) is a self-adjoint operator on \(\K_{\Rbb,0}\) whose form domain contains \(D\). To ensure that \(\mu_{\delta}\) is an inner product we require that
\(\one+\delta>0\) is strictly positive. We extend \(\delta\) by complex linearity to \(\K_0\), where we note that \(\Gamma_0\delta\Gamma_0=\delta\).

We want to ensure that the inner product \(\mu_{\delta}\) on \(D\) dominates the pre-symplectic form \(\sigma\) as in \eqref{eqn:mudominates}. This means, equivalently, that \(\one+\delta+\Sigma_0\) defines a Hermitean inner product on \(D+iD\subset\K_0\), i.e. \(\one+\delta+\Sigma_0\ge0\). A sufficient condition for this to hold is \(\delta\ge0\).

In line with our notations in Section \ref{section:notations} we let \(\K_{\Rbb,\delta}\) denote the Hilbert space completion of \(D\) in the inner product \(\mu_{\delta}\) and \(\K_{\delta}\) its complexification. There is a unitary map \(V_{\delta}:\K_{\delta}\to\K_0\) defined by \(V_{\delta}f:=\sqrt{\one+\delta}f\) for all \(f\in D\). If \(R_{\delta}\) denotes the polarisation operator on \(\K_{\delta}\), then we must have \(\sigma(f,g)=\mu_0(f,R_0g)=\mu_{\delta}(f,R_{\delta}g)\) and hence
\begin{align}
V_{\delta}R_{\delta}V_{\delta}^*&=
(\one+\delta)^{-\frac12}R_0(\one+\delta)^{-\frac12}\notag
\end{align}
and similarly
\begin{align}
V_{\delta}\Sigma_{\delta}V_{\delta}^*&=
(\one+\delta)^{-\frac12}\Sigma_0(\one+\delta)^{-\frac12}\,.\label{eqn:VSigmaV}
\end{align}
Note that \(V_{\delta}\Sigma_{\delta}V_{\delta}^*\)
is self-adjoint and has norm \(\le1\), because \(\one+\delta+\Sigma_0\ge0\). 

The perturbed state defined by \(\mu_{\delta}\) has a well-defined modular operator iff \(\one+\Sigma_{\delta}>0\), which is equivalent to \(\one+\delta+\Sigma_0>0\). (A sufficient condition for this to hold is \(\delta>0\).) Denoting the unitary map relating \(\H_{\delta}\) and \(\K_{\delta}\)
by \(U_{\delta}\), we then have
\begin{align}
V_{\delta}U_{\delta}\Delta_{\delta}U_{\delta}^*V_{\delta}^*&=\frac{\one-V_{\delta}\Sigma_{\delta}V_{\delta}^*}{\one+V_{\delta}\Sigma_{\delta}V_{\delta}^*}\notag
\end{align}
by \eqref{deltaRidentification} and hence
\begin{align}
V_{\delta}U_{\delta}K_{\delta}U_{\delta}^*V_{\delta}^*&=-\log\left(\frac{\one-V_{\delta}\Sigma_{\delta}V_{\delta}^*}{\one+V_{\delta}\Sigma_{\delta}V_{\delta}^*}
\right)
=2\artanh\left(V_{\delta}\Sigma_{\delta}V_{\delta}^*\right)\notag
\end{align}
by an identity for the inverse hyperbolic tangent.

\medskip

Now we turn to the question of quasi-equivalence of the two Gaussian states on the Weyl algebra \(\mathcal{A}\left[ D,\sigma\right] \) determined by \(\mu_0\) and \(\mu_{\delta}\). In general, we call two states on a \cstar-algebra quasi-equivalent if the GNS-representations they generate are quasi-equivalent, i.e., if every subrepresentation of the first contains a representation which is unitarily equivalent to a subrepresentation of the second \cite{haag2012local}.

Necessary and sufficient conditions for the quasi-equivalence of the two states were obtained in the general case in \cite{araki1982quasi}. The conditions consist of two parts. Firstly, the norms determined by the inner products \(\mu_0\) and \(\mu_{\delta}\) should be equivalent, which means that
\begin{align}
\one+\delta \, ,\ (\one+\delta)^{-1}&\in\mathcal{B}(\K_0)\label{condition1}
\end{align}
must both be in the space \(\mathcal{B}(\K_0)\) of bounded operators on \(\K_0\). Secondly, the difference
\begin{align}
\sqrt{\one+\delta+\Sigma_0}-\sqrt{\one+\Sigma_0}&\in\mathcal{L}^2(\K_0)\label{condition2}
\end{align}
must be in the space \(\mathcal{L}^2(\K_0)\) of Hilbert-Schmidt operators on \(\K_0\).

When the conditions \eqref{condition1} and \eqref{condition2} are satisfied, \cite{araki1982quasi} extend the quasi-free states to pure states on the Weyl algebra of a doubled symplectic space and show that these are unitarily equivalent. This construction simplifies in the case where the states $\mu_0$ and $\mu_{\delta}$ have invertible polarisation operators satisfying the additional condition \eqref{Radditionalcondition}. In this case we may identify \(H_0:=\K_{0,\Rbb}\) as a factorial standard subspace of the one-particle Hilbert space \(\H_0\). Following \cite{longo2022modular}, the doubled symplectic space can then be identified as \(H_0+H_0'\) endowed with the inner product \(\Re(,)_{\H_0}\) and with the symplectic form \(\Im(,)_{\H_0}\), which extends the symplectic form on \(H_0\). (This is called the symplectic dilation in \cite{longo2022modular}.) The completion of
this doubled symplectic space w.r.t. its inner product is \(\H_0\) as a real Hilbert space. Analogous identifications hold for the perturbed state.
Using the modular conjugations, indicated with the appropriate subscripts, there is a bijection
\begin{align}
T:H_0+H_0'&\to H_{\delta}+H_{\delta}'\notag\\
\xi+J_0\eta&\mapsto \xi+J_{\delta}\eta\notag
\end{align}
for all \(\xi,\eta\in H_0\). This bijection is symplectic, \(\Im(T(\xi_1+J_0\eta_1),T(\xi_2+J_0\eta_2))_{\H_{\delta}}=\sigma(\xi_1,\xi_2)-\sigma(\eta_1,\eta_2)=\Im(\xi_1+J_0\eta_1,\xi_2+J_0\eta_2)_{\H_0}\),
and it allows us to transport the structure of the perturbed system to the Hilbert space \(\H_0\), where we have
\begin{align}
\Re(T\xi,T\eta)_{\H_{\delta}}&=\Re(\xi,T^*T\eta)_{\H_0}\notag
\end{align}
for all \(\xi,\eta\in H_0+H_0'\). 

The results of \cite{araki1982quasi} show that the states determined by \(\mu_0\) and \(\mu_{\delta}\) are quasi-equivalent iff the states determined by \(\Re(.,.)\) and \(\Re(.,T^*T.)\) on \(H_0+H_0'\) are unitarily equivalent, which, by the results of Shale \cite{Shale1962}, is equivalent to\footnote{More precisely, Shale \cite{Shale1962} assumes the existence of a bounded Bogolyubov transformation with a bounded inverse, i.e. a bounded real linear map \(B:\H_0\to\H_0\) with a bounded inverse which preserves the symplectic form
and such that \(B^*B=T^*T\) gives the change in the inner product. If \(T\) is not bounded, then condition \eqref{condition1} of \cite{araki1982quasi} is violated, so quasi-equivalence (and hence unitary equivalence) must fail for the purified states. However, assuming \eqref{condition1} one can show that \(T\) is bounded \cite{araki1982quasi,ArakiShiraishi1971} with a bounded inverse. Furthermore, if \(T\) is bounded, \cite{araki1982quasi} prove the existence of the desired Bogolyubov transformation.}
\begin{align}
T^*T-\one&\in\mathcal{L}^2(\H_0)\,.\notag
\end{align}
Denoting the polarisation operators of the extended systems by a hat, Longo shows that this condition is equivalent to 
\begin{align}
C&:=T\hat{R}_0(T^*T-\one)=
\hat{R}_{\delta}T-T\hat{R}_0\notag
\end{align}
being in \(\mathcal{L}^2(\H_0,\H_{\delta})\) (as real Hilbert spaces), cf. \cite[Proposition 3.10]{longo2022modular}. 
Decomposing \(C\) as operators between \(H_0\), \(H_{\delta}\) and their real orthogonal complements, and exploiting the intertwining properties of \(C\) with the polarisation operators and modular conjugations, 
\cite[Theorem 4.2]{longo2022modular} finds an equivalent expression for the conditions for quasi-equivalence in terms of the polarisation operators \(R_0\) and \(R_{\delta}\), which may be re-expressed in terms of modular operators.
%
Instead of aiming for the most general setting, we will use the following more special case, which we can directly apply to our perturbed Gaussian states.
\begin{corollary}\label{Longotheorem2}
\textup{(Longo \cite[Corollary 4.4]{longo2022modular})}
Let \(H\) be a real Hilbert space w.r.t. two equivalent norms \(\mu_k\), \(k=1,2\), and let \(R_k\) be corresponding polarisation operators such that \(\mu_1(f,R_1g)=\mu_2(f,R_2g)\) for all \(f,g\in H\). Assume that \(\one+R_1^2>0\), \(\one+R_2^2>0\) and that
\(R_1\) and \(R_2\) are invertible with
\(R_1^{-1}-R_2^{-1} \in \mathcal{L}^2\!\left( H\right)\). Then, the Gaussian states determined by \(\mu_1\) and \(\mu_2\) are quasi-equivalent iff both
\begin{equation}\label{Longocondition2}
R_1^{-1}\sqrt{\one+R_1^2}-R_2^{-1}\sqrt{\one+R_2^2}  \in \mathcal{L}^2\!\left( H\right)
\end{equation}
and
\begin{equation}\label{Longocondition3}
\sqrt{\one+R_1^2}-\sqrt{\one+R_2^2}  \in \mathcal{L}^2\!\left( H\right) \, .
\end{equation}
\end{corollary}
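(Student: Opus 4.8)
The plan is to reduce the statement to the Araki--Yamagami criterion \eqref{condition1}--\eqref{condition2} recalled above and then to unpack the Hilbert--Schmidt condition \eqref{condition2} into the two conditions \eqref{Longocondition2} and \eqref{Longocondition3}. (A shorter route exists: the claim is \cite[Theorem 4.2]{longo2022modular} specialised to the case of invertible polarisation operators, which is exactly what turns it into a corollary; but I will sketch the more self-contained argument.)

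First I would put \(\mu_0:=\mu_1\) and introduce the positive operator \(\one+\delta\) on \(\K_{\Rbb,0}\) by \(\mu_2(f,g)=\mu_0(f,(\one+\delta)g)\), so that \(\mu_2=\mu_\delta\) in the notation of this section and \(R_2=(\one+\delta)^{-1}R_1\). Equivalence of the two norms is precisely \(\one+\delta,(\one+\delta)^{-1}\in\mathcal{B}(\K_0)\), i.e.\ \eqref{condition1}, and \(\one+R_k^2>0\) gives \(\N_k=\{0\}\), so that the modular data of both states are available. Hence, by \cite{araki1982quasi}, quasi-equivalence is equivalent to the single remaining condition \eqref{condition2}, namely \(\sqrt{\one+\delta+\Sigma_0}-\sqrt{\one+\Sigma_0}\in\mathcal{L}^2(\K_0)\).

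Next I would split this using the conjugation \(\Gamma_0\), which commutes with \(\delta\) and anticommutes with \(\Sigma_0\). The maps \(A\mapsto\tfrac12(A\pm\Gamma_0 A\Gamma_0)\) are complementary projections of \(\mathcal{L}^2(\K_0)\), and under \(\K_0=\K_{\Rbb,0}\oplus i\K_{\Rbb,0}\) the \(\Gamma_0\)-even part of a complex-linear operator has the form \(\mathrm{diag}(X,X)\) with \(X\) real and the odd part has the form \(iY\) with \(Y\) real, so membership in \(\mathcal{L}^2(\K_0)\) amounts to \(X\) (resp.\ \(Y\)) lying in \(\mathcal{L}^2(H)\). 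Using \(\Gamma_0\sqrt{\one+\delta\pm\Sigma_0}\Gamma_0=\sqrt{\one+\delta\mp\Sigma_0}\), condition \eqref{condition2} becomes the conjunction of an even condition on \((\sqrt{\one+\delta+\Sigma_0}+\sqrt{\one+\delta-\Sigma_0})-(\sqrt{\one+\Sigma_0}+\sqrt{\one-\Sigma_0})\) and an odd condition on the same expression with each \(\sqrt{\,\cdot\,}+\sqrt{\,\cdot\,}\) replaced by \(\sqrt{\,\cdot\,}-\sqrt{\,\cdot\,}\). For the unperturbed pieces one has \((\sqrt{\one+\Sigma_0}+\sqrt{\one-\Sigma_0})^2=2(\one+\sqrt{\one+R_1^2})\) and \(\sqrt{\one+\Sigma_0}-\sqrt{\one-\Sigma_0}=\sqrt2\,iR_1(\one+\sqrt{\one+R_1^2})^{-1/2}\), since \(\one-\Sigma_0^2=\one+R_1^2\); together with the identities \(\sqrt{\one+R_k^2}=\one+R_k^2(\one+\sqrt{\one+R_k^2})^{-1}\) and \(R_k^{-1}\sqrt{\one+R_k^2}=R_k^{-1}+R_k(\one+\sqrt{\one+R_k^2})^{-1}\), and with the observation that \(\|R_k\|\le1\) makes \(\one+\sqrt{\one+R_k^2}\) bounded with bounded inverse, the even condition should reduce to \eqref{Longocondition3} and the odd condition to \eqref{Longocondition2}. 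The assumption \(R_1^{-1}-R_2^{-1}\in\mathcal{L}^2(H)\) enters here, as it is what allows the \(R_k^{-1}\) summand in the last identity to be split off --- the freedom that is unavailable in the general theorem.

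The step I expect to be the main obstacle is making rigorous the passage between the perturbed square roots \(\sqrt{\one+\delta\pm\Sigma_0}\) and the \(R_1,R_2\)-expressions. Since \(\delta\) does not commute with \(\Sigma_0\), one only has \(\one+\delta\pm\Sigma_0=(\one+\delta)^{1/2}(\one\pm V_\delta\Sigma_\delta V_\delta^*)(\one+\delta)^{1/2}\) by \eqref{eqn:VSigmaV}, so the square roots do not factor through \(\one+\delta\); moreover \(\one+R_k^2\) is only strictly positive and not uniformly positive, so the naive equivalence ``\(\sqrt A-\sqrt B\in\mathcal{L}^2\iff A-B\in\mathcal{L}^2\)'' breaks down at the bottom of the spectrum. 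This is precisely where the operator inequalities of Powers--St{\o}rmer \cite{powers1970free}, van Hemmen--Ando \cite{van1980inequality}, Kittaneh \cite{kittaneh1987inequalitiesII} and Kosaki \cite{kosaki1998arithmetic} --- which control \(\|f(A)-f(B)\|_{\mathcal{L}^2}\) for operator-monotone \(f\) without assuming commutativity or a uniform lower bound --- must be combined with \(\one+\delta+\Sigma_0\ge0\) and the invertibility hypotheses to turn the two \(\Gamma_0\)-graded Hilbert--Schmidt conditions into \eqref{Longocondition3} and \eqref{Longocondition2}.
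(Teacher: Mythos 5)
First, a point of orientation: the paper itself does not prove this statement --- it is quoted as Longo's Corollary 4.4, and the paragraphs preceding it only sketch the route taken in \cite{longo2022modular}: purify both Gaussian states on the symplectic dilation \(H_0+H_0'\), invoke Shale's theorem so that quasi-equivalence becomes a Schatten-class condition on \(T^*T-\one\), translate this into the intertwiner \(C=\hat{R}_{\delta}T-T\hat{R}_0\in\mathcal{L}^2\), and then decompose \(C\) relative to \(H_0\), \(H_{\delta}\) and their real orthogonal complements using its intertwining relations with the polarisation operators and modular conjugations. Your proposal takes a genuinely different route --- unpacking the Araki--Yamagami condition \eqref{condition2} directly by \(\Gamma_0\)-parity --- and the set-up is sound as far as it goes: the identification \(R_2=(\one+\delta)^{-1}R_1\), the even/odd projections of \(\mathcal{L}^2(\K_0)\), and the algebraic identities for \(\sqrt{\one+\Sigma_0}\pm\sqrt{\one-\Sigma_0}\) are all correct.

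However, there is a genuine gap exactly where you flag it, and flagging it does not close it. The entire content of the corollary is the equivalence between the single mixed condition \(\sqrt{\one+\delta+\Sigma_0}-\sqrt{\one+\Sigma_0}\in\mathcal{L}^2(\K_0)\) and the pair \eqref{Longocondition2}, \eqref{Longocondition3}, and your argument stops precisely at the step where the perturbed square roots \(\sqrt{\one+\delta\pm\Sigma_0}\) must be re-expressed through \(R_2\). Because \(\delta\) and \(\Sigma_0\) do not commute, there is no analogue of the identity \(\bigl(\sqrt{\one+\Sigma_0}+\sqrt{\one-\Sigma_0}\bigr)^2=2\bigl(\one+\sqrt{\one+R_1^2}\bigr)\) for the perturbed operators, and the inequalities you invoke are one-directional: Powers--St{\o}rmer and Kosaki bound \(\|\sqrt{A}-\sqrt{B}\|\) by norms of \(A-B\), which could help show that \eqref{Longocondition2} and \eqref{Longocondition3} \emph{imply} \eqref{condition2}, but the corollary is an ``iff'', and the converse direction requires lower bounds of the type \(\|A-B\|_2\lesssim\|\sqrt{A}-\sqrt{B}\|_2\); the van Hemmen--Ando constant \(\alpha\) there degenerates precisely because \(\one+R_k^2\) is only strictly, not uniformly, positive. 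A second, glossed-over issue is that in \eqref{Longocondition3} the square root of \(\one+R_2^2\) is the positive square root w.r.t.\ \(\mu_2\), i.e.\ \(B^{-1}\sqrt{\one+(B^{-1}R_1B^{-1})^2}\,B\) with \(B=(\one+\delta)^{1/2}\) when written on \(\K_0\) (cf.\ Remark \ref{Longoremark}); since this operator is not \(\mu_1\)-self-adjoint, it is not evident that your \(\Gamma_0\)-graded decomposition, which is orthogonal with respect to the \(\mu_1\)-structure, separates cleanly into one condition equivalent to \eqref{Longocondition3} and one equivalent to \eqref{Longocondition2} rather than into a mixture of the two. As it stands the proposal is a plausible programme, not a proof.
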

\begin{remark}\label{Longoremark}
The invertibility of the \(R_k\) means that \(H\) is a factorial standard subspace in each of the one-particle Hilbert spaces \(\H_k\) as constructed in Section \ref{section:notations}. The square root of \(\one+R_1^2\) is the positive square root w.r.t. the inner product of \(\mu_1\) and the square root of \(\one+R_2^2\) is the positive square root w.r.t. the inner product of \(\mu_2\). If \(\mu_2(.,.)=\mu_1(.,B^2.)\) for a bounded positive operator \(B\) with bounded inverse, then \(R_2=B^{-2}R_1\) and 
\(\sqrt{\one+R_2^2}=B^{-1}\sqrt{\one+(B^{-1}R_1B^{-1})^2}B\) in terms of the positive square root w.r.t \(\mu_1\) (cf. \cite{araki1982quasi}).
Also notice that  condition~\eqref{Longocondition3} appears to be redundant. Indeed, as \(R_1\) and \(R_2\) are bounded and \(R_1^{-1}-R_2^{-1} \in \mathcal{L}^2\!\left( H\right)\), condition~\eqref{Longocondition2} already implies~\eqref{Longocondition3}. Vice versa, if we assume that, say, \(R_2^{-1}\) is bounded, then \eqref{Longocondition3}
implies \eqref{Longocondition2}, because
\begin{equation}
R_1^{-1}\sqrt{\one+R_1^2}-R_2^{-1}\sqrt{\one+R_2^2}
=\left( R_1^{-1}-R_2^{-1}\right) \sqrt{\one+R_1^2}+R_2^{-1}\left( \sqrt{\one+R_1^2}-\sqrt{\one+R_2^2}\right)
\end{equation}
will be Hilbert-Schmidt.
\end{remark}

\medskip

Now we want to supplement the above results on quasi-equivalence with some explicit estimates on the Hilbert-Schmidt norms of the operators involved. For this purpose, we will use the following known inequalities.

\begin{theorem}\label{powersstormer}
\textup{(Powers, St{\o}rmer \cite[Lemma 4.1]{powers1970free})}
Let \(A\), \(B\) be positive operators over some Hilbert space. Then,
\begin{equation}
\norm{\sqrt{A}-\sqrt{B}}_\textup{HS}^2\leq\tr\abs{A-B}\, .
\end{equation}
\end{theorem}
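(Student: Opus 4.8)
\emph{Proof proposal (Powers--St{\o}rmer inequality).} Write \(S:=\sqrt{A}-\sqrt{B}=S^*\) and \(T:=\sqrt{A}+\sqrt{B}\geq0\), and record the elementary identity \(ST+TS=2(A-B)\). If \(\tr\abs{A-B}=\infty\) there is nothing to prove, so I assume \(A-B\) is trace class; the trace manipulations below can be made rigorous by first treating the finite-rank (or finite-dimensional) case and passing to a limit, and a posteriori one recovers \(S\in\mathcal{L}^2\). The plan has two steps: (i) show \(\tr\abs{A-B}\geq\tr(T\abs{S})\); (ii) show \(\tr(T\abs{S})\geq\tr(S^2)=\norm{\sqrt{A}-\sqrt{B}}_\textup{HS}^2\).

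For step (i), let \(S=S_+-S_-\) be the Jordan decomposition, with \(S_\pm\geq0\) and mutually orthogonal support projections \(E\) and \(F\); then \(J:=E-F\) is self-adjoint with \(\|J\|\leq1\) and \(SJ=JS=\abs{S}=S_++S_-\). Since \(A-B\) is self-adjoint and trace class and \(\|J\|\leq1\), one has \(\tr\bigl((A-B)J\bigr)\leq\tr\bigl((A-B)_+\bigr)+\tr\bigl((A-B)_-\bigr)=\tr\abs{A-B}\). On the other hand, using \(ST+TS=2(A-B)\), cyclicity of the trace, \(T=T^*\), and \(SJ=JS=\abs{S}\), a short computation gives \(\tr\bigl((A-B)J\bigr)=\tfrac12\tr\bigl((ST+TS)J\bigr)=\tr(T\abs{S})\). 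Combining the two estimates yields \(\tr\abs{A-B}\geq\tr(T\abs{S})\).

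For step (ii), the key point is \emph{not} to invoke the (false) implication ``\(T\geq\pm S\Rightarrow T\geq\abs{S}\)'', which fails for non-commuting operators — this is the only genuinely delicate point. Instead, I would compress the inequalities \(T-S\geq0\) and \(T+S\geq0\) by the projections \(E\) and \(F\) respectively. Since the ranges of \(S_+\) and \(S_-\) are orthogonal, \(ESE=S_+\) and \(FSF=-S_-\), hence \(ETE\geq S_+\) and \(FTF\geq S_-\). Multiplying by the positive operators \(S_+\), resp.\ \(S_-\), using \(\tr(XY)\geq0\) for \(X,Y\geq0\) together with \(ES_+E=S_+\) and \(FS_-F=S_-\), one gets \(\tr(TS_+)=\tr(ETE\,S_+)\geq\tr(S_+^2)\) and \(\tr(TS_-)=\tr(FTF\,S_-)\geq\tr(S_-^2)\). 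Adding, \(\tr(T\abs{S})=\tr(TS_+)+\tr(TS_-)\geq\tr(S_+^2)+\tr(S_-^2)=\tr(S^2)\). Together with step (i), \(\norm{\sqrt{A}-\sqrt{B}}_\textup{HS}^2=\tr(S^2)\leq\tr(T\abs{S})\leq\tr\abs{A-B}\), which is the claim.

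I expect the main obstacle to be precisely the temptation to shortcut step (ii) through the operator inequality \(T\geq\abs{S}\); once one sees that this has to be replaced by the two separate compressions by the support projections of \(S_\pm\), the remainder is routine trace bookkeeping (plus the usual care needed to justify the trace manipulations in infinite dimensions).
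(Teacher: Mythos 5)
The paper does not actually prove this statement: Theorem~\ref{powersstormer} is quoted from Powers and St{\o}rmer \cite[Lemma 4.1]{powers1970free} and used as a black box, so there is no in-paper proof to compare against. Your argument is correct and is, in essence, the classical Powers--St{\o}rmer proof: the identity \(ST+TS=2(A-B)\), the contraction \(J=E-F\) with \(SJ=JS=\abs{S}\) giving \(\tr\abs{A-B}\geq\tr(T\abs{S})\), and then \(\tr(T\abs{S})\geq\tr(S^2)\). Your step (ii) takes a slight detour: the compressions \(ETE\geq S_+\) and \(FTF\geq S_-\) are correct but not needed, since \(\tr\bigl((T-S)S_+\bigr)\geq0\) and \(\tr\bigl((T+S)S_-\bigr)\geq0\) already give \(\tr(TS_+)\geq\tr(SS_+)=\tr(S_+^2)\) and \(\tr(TS_-)\geq-\tr(SS_-)=\tr(S_-^2)\) directly, using only \(\tr(XY)\geq0\) for \(X,Y\geq0\) and \(S_+S_-=0\). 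Your warning that \(T\geq\pm S\) does not imply \(T\geq\abs{S}\) is well taken; that is indeed the standard trap. The only point that genuinely requires the limiting argument you flag is the splitting \(\tr\bigl((ST+TS)J\bigr)=\tr(STJ)+\tr(TSJ)\) together with cyclicity, since a priori only the sum \(ST+TS\), not the individual products, is known to be trace class; passing through finite-rank truncations (or spectral cutoffs of \(T\)) and using lower semicontinuity of the Hilbert--Schmidt norm handles this and also yields \(S\in\mathcal{L}^2\) a posteriori, exactly as you indicate.
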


\begin{theorem}\label{vanHemmenAndoinequality}
\textup{(generalised van Hemmen-Ando inequality \cite[Theorem 5]{kittaneh1987inequalitiesII}, \cite[Prop. 2.1]{van1980inequality})}
Let \(A\) and \(B\) be positive, bounded operators on some Hilbert space. For any \(\alpha\ge0\) such that
\(\sqrt{A}+\sqrt{B}\geq \alpha \one\)
one has
\begin{equation}
\alpha \norm{\sqrt{A}-\sqrt{B}}_p\leq\norm{A-B}_p
\end{equation}
for any \(p\)-norm with \(p\in[1,\infty]\).
\end{theorem}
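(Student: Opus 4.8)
The plan is to rephrase the claim as a statement about a Lyapunov equation relating \(\sqrt{A}-\sqrt{B}\) to \(A-B\), and then to invert that equation by an explicit integral representation. First I would dispose of the trivial cases: if \(\alpha=0\) there is nothing to prove, and if \(1\le p<\infty\) with \(A-B\notin\mathcal{L}^p\) the right-hand side is \(+\infty\). So assume \(\alpha>0\) and, for finite \(p\), that \(A-B\in\mathcal{L}^p\). Put \(X:=\sqrt{A}\), \(Y:=\sqrt{B}\), \(S:=X+Y\) and \(D:=X-Y\), all bounded and self-adjoint since \(A,B\) are bounded and positive. Expanding \((X+Y)(X-Y)+(X-Y)(X+Y)=2(X^2-Y^2)\) yields the key algebraic identity \(A-B=\tfrac12(SD+DS)\); that is, \(D\) solves the Lyapunov equation \(SD+DS=2(A-B)\), and everything now hinges on positivity of \(S\).

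Next I would solve this equation explicitly. Since \(\sqrt{A}+\sqrt{B}=S\ge\alpha\one>0\), the operator \(S\) is boundedly invertible and the semigroup it generates obeys \(\norm{e^{-tS}}\le e^{-\alpha t}\) for \(t\ge0\). Introduce
\[
\widetilde{D}:=2\int_0^\infty e^{-tS}(A-B)\,e^{-tS}\,dt ,
\]
where the integral is a Bochner integral in \(\mathcal{L}^p\): the integrand is continuous and, by the elementary Schatten-norm bound \(\norm{e^{-tS}Me^{-tS}}_p\le\norm{e^{-tS}}^2\norm{M}_p\), dominated in norm by \(2e^{-2\alpha t}\norm{A-B}_p\). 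Using \(\tfrac{d}{dt}(e^{-tS}Me^{-tS})=-Se^{-tS}Me^{-tS}-e^{-tS}Me^{-tS}S\) and integrating from \(0\) to \(\infty\) (the boundary term at \(\infty\) vanishes) gives \(S\widetilde{D}+\widetilde{D}S=2(A-B)\). So \(\widetilde{D}\) and \(D\) satisfy the same Lyapunov equation, and their difference \(E\) obeys \(SE+ES=0\), whence \(S^2E=ES^2\), so \(E\) commutes with \(S=\sqrt{S^2}\), so \(2SE=0\), so \(E=0\) by invertibility of \(S\). Therefore \(\widetilde{D}=D=\sqrt{A}-\sqrt{B}\).

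Finally I would pass the \(p\)-norm through the integral:
\[
\norm{\sqrt{A}-\sqrt{B}}_p=\norm{\widetilde{D}}_p\le 2\int_0^\infty \norm{e^{-tS}}^2\,\norm{A-B}_p\,dt\le 2\norm{A-B}_p\int_0^\infty e^{-2\alpha t}\,dt=\frac{1}{\alpha}\norm{A-B}_p ,
\]
which rearranges to the asserted inequality \(\alpha\norm{\sqrt{A}-\sqrt{B}}_p\le\norm{A-B}_p\); the case \(p=\infty\) is included verbatim, with submultiplicativity of the operator norm in place of the Schatten-norm bound. I expect the main obstacle to be purely technical rather than conceptual: carefully justifying the operator-valued integral (convergence in \(\mathcal{L}^p\), differentiation under the integral sign, vanishing of the boundary term at \(\infty\)) and the uniqueness of the bounded solution of the Lyapunov equation, so that \(\widetilde{D}\) really is \(\sqrt{A}-\sqrt{B}\) and not merely one of possibly many solutions. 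It is also worth keeping in mind that the hypothesis that \(A\) and \(B\) themselves --- not merely their difference --- are bounded is exactly what keeps \(S\), \(D\) and the semigroup \(e^{-tS}\) bona fide bounded operators throughout.
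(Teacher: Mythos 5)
The paper does not prove this statement at all --- it is quoted verbatim from the cited literature (Kittaneh; van Hemmen--Ando) and used as a black box --- so there is no in-paper argument to compare against. Your proposal is a correct, self-contained proof, and it is essentially the standard route taken in those references: reduce to the Sylvester/Lyapunov equation \(S D + D S = 2(A-B)\) with \(S=\sqrt{A}+\sqrt{B}\ge\alpha\one\), solve it by the integral representation \(2\int_0^\infty e^{-tS}(A-B)e^{-tS}\,dt\), and push the Schatten norm through the integral using \(\norm{e^{-tS}}\le e^{-\alpha t}\). All the steps check out: the algebraic identity, the convergence and differentiation of the Bochner integral (legitimate because \(S\) is bounded, so \(t\mapsto e^{-tS}\) is norm-differentiable), the uniqueness argument (\(SE=-ES\) gives \(S^2E=ES^2\), hence \(E\) commutes with \(S=\sqrt{S^2}\) by functional calculus, hence \(SE=0\) and \(E=0\)), and the final estimate \(\norm{\widetilde D}_p\le 2\norm{A-B}_p\int_0^\infty e^{-2\alpha t}\,dt=\alpha^{-1}\norm{A-B}_p\). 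The constants are right, and the degenerate cases (\(\alpha=0\), \(A-B\notin\mathcal{L}^p\)) are correctly dispatched. In effect you have supplied the proof that the paper delegates to its references.
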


\begin{theorem}\label{AMGM}
\textup{(Operator AM-GM inequality, cf.~\cite[Theorem 1 and Appendix A]{kosaki1998arithmetic})}
Let \(A\), \(B\), \(X\) be bounded operators on some Hilbert space, with \(A\), \(B\) positive. For any unitarily invariant norm \(\norm{\,\cdot\,}\), one has
\begin{equation}
\norm{\sqrt{A}X\sqrt{B}}\leq\frac{1}{2}\norm{AX+XB}\, .
\end{equation}
\end{theorem}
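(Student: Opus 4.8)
\emph{Proof strategy.} The plan is to reduce the statement to a scalar identity that encodes the arithmetic–geometric mean inequality and then to lift it to operators through a Fourier-analytic integral representation, in the spirit of Kosaki \cite{kosaki1998arithmetic}. A preliminary reduction deals with degeneracies: replacing \(A,B\) by \(A+\epsilon\one,B+\epsilon\one\) and letting \(\epsilon\downarrow 0\), one has \(\sqrt{A+\epsilon\one}\,X\,\sqrt{B+\epsilon\one}\to\sqrt{A}\,X\,\sqrt{B}\) and \((A+\epsilon\one)X+X(B+\epsilon\one)\to AX+XB\) in operator norm, so it is enough to prove the inequality when \(A\) and \(B\) are invertible (the passage to the limit uses only lower semicontinuity of unitarily invariant norms under operator-norm convergence).

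The scalar input is the identity
\[
\sqrt{ab}=\frac{a+b}{2}\cdot\frac{1}{\cosh\bigl(\tfrac12\log\tfrac ab\bigr)}\qquad(a,b>0),
\]
together with the self-reciprocity of \(u\mapsto 1/\cosh u\) under Fourier transform, which in the normalisation needed here reads
\[
\frac{1}{\cosh(u/2)}=\int_{\Rbb}\frac{e^{iu\xi}}{\cosh(\pi\xi)}\,d\xi,\qquad \int_{\Rbb}\frac{d\xi}{\cosh(\pi\xi)}=1,
\]
so that \(\xi\mapsto 1/\cosh(\pi\xi)\) is a probability density. Taking \(u=\log a-\log b\) and multiplying through by \(\tfrac{a+b}{2}\) gives \(\sqrt{ab}=\tfrac12\int_{\Rbb}\bigl(a^{1+i\xi}b^{-i\xi}+a^{i\xi}b^{1-i\xi}\bigr)/\cosh(\pi\xi)\,d\xi\).

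Next I would promote this to the operator identity
\[
\sqrt{A}\,X\,\sqrt{B}=\frac12\int_{\Rbb}\frac{1}{\cosh(\pi\xi)}\,A^{i\xi}\,(AX+XB)\,B^{-i\xi}\,d\xi,
\]
which one checks by inserting the spectral resolutions \(A=\int\lambda\,dE(\lambda)\), \(B=\int\mu\,dF(\mu)\): the ``symbol'' of the integrand at fixed \(\xi\) is \(\lambda^{i\xi}(\lambda+\mu)\mu^{-i\xi}/(2\cosh(\pi\xi))\), whose \(\xi\)-integral is exactly \(\sqrt{\lambda\mu}\) by the scalar identity above, i.e.\ the symbol of \(\sqrt{A}X\sqrt{B}\). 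Taking a unitarily invariant norm and the triangle inequality for integrals yields
\[
\norm{\sqrt{A}\,X\,\sqrt{B}}\le\frac12\int_{\Rbb}\frac{1}{\cosh(\pi\xi)}\,\norm{A^{i\xi}\,(AX+XB)\,B^{-i\xi}}\,d\xi.
\]
Since \(A,B\) are positive and invertible, \(A^{i\xi}\) and \(B^{-i\xi}\) are unitary, so \(\norm{A^{i\xi}(AX+XB)B^{-i\xi}}=\norm{AX+XB}\) for every \(\xi\); pulling this constant out of the integral and using \(\int_{\Rbb}d\xi/\cosh(\pi\xi)=1\) gives \(\norm{\sqrt{A}X\sqrt{B}}\le\tfrac12\norm{AX+XB}\), with the sharp constant.

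The main obstacle is the rigorous justification of the operator identity: one must legitimise the interchange of the scalar Fourier integral with the two spectral integrals (Fubini, using that \(1/\cosh(\pi\xi)\) is integrable and that \(\xi\mapsto A^{i\xi}(AX+XB)B^{-i\xi}\) is uniformly bounded and continuous), and, in full generality, attend to the meaning of ``unitarily invariant norm'' outside finite rank / Schatten ideals as well as to the approximation of non-invertible \(A,B\). In the finite-dimensional situation that we actually use in Section \ref{sec:scalarfields} all of this is immediate: the integral converges absolutely in operator norm, the spectral integrals are finite sums, and every unitarily invariant norm is continuous.
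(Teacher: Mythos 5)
The paper offers no proof of this statement: it is quoted verbatim from Kosaki, and your argument is essentially a reconstruction of Kosaki's own proof (the integral representation
\(\sqrt{A}X\sqrt{B}=\tfrac12\int_{\Rbb}\cosh(\pi\xi)^{-1}A^{i\xi}(AX+XB)B^{-i\xi}\,d\xi\),
obtained from the scalar identity \(\sqrt{ab}=\tfrac{a+b}{2}\,\mathrm{sech}\bigl(\tfrac12\log\tfrac ab\bigr)\) and the self-reciprocity of \(\mathrm{sech}\) under Fourier transform, followed by unitary invariance and \(\int_{\Rbb}\mathrm{sech}(\pi\xi)\,d\xi=1\)). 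The scalar identity, the normalisation of the Fourier transform, the symbol computation and the unitary-invariance step are all correct, so the core of the argument is sound. One caveat deserves attention: your reduction to invertible \(A,B\) via \(A\mapsto A+\epsilon\one\) replaces \(AX+XB\) by \(AX+XB+2\epsilon X\), and if \(X\) does not itself lie in the norm ideal the right-hand side becomes infinite for every \(\epsilon>0\), so the limit argument is vacuous. This is exactly the situation in which the theorem is applied in the proof of Theorem~\ref{QEestimatestheorem}, where the norm is the trace norm, \(\sqrt{A}=\sqrt{B}=\sqrt{\one+\delta}-\one\) is compact (hence \(A\) is not invertible) and \(X=R_0^{-1}\) is bounded but not trace class. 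The standard fix is multiplicative rather than additive regularisation: cut off \(A\) and \(B\) by their spectral projections \(P_n=E_{[1/n,\infty)}(A)\), \(Q_n=E_{[1/n,\infty)}(B)\), note that \(A_n(P_nXQ_n)+(P_nXQ_n)B_n=P_n(AX+XB)Q_n\) has unitarily invariant norm at most \(\norm{AX+XB}\), apply the invertible case on the reduced spaces, and pass to the limit by lower semicontinuity. With that adjustment your proof is complete and yields the sharp constant.
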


Note that the arithmetic mean-geometric mean inequality of Theorem \ref{AMGM} applies in particular when \(\norm{\,\cdot\,}\) is the operator norm, the Hilbert-Schmidt norm or the trace norm.

\medskip

Condition \eqref{condition2} implies that
\begin{align}
\delta&=\frac12\sum_{\pm}(\sqrt{\one+\delta+\Sigma_0}\mp\sqrt{\one+\Sigma_0})(\sqrt{\one+\delta+\Sigma_0}\pm\sqrt{\one+\Sigma_0})
\end{align}
is also Hilbert-Schmidt. To simplify our treatment for \(\mu_{\delta}\) and \(\mu_0\) we will make the stronger assumptions \(\delta\ge0\) and
\begin{align}
\tr \delta &<\infty\,.\label{condition2'}
\end{align}
These conditions have the advantage that they do not depend on \(\Sigma\) and they will be general enough for our applications in Section \ref{sec:scalarfields}. 
They imply that both \(\one+\delta\) and \((\one+\delta)^{-1}\) are bounded and, by the Powers-Størmer inequality, Theorem \ref{powersstormer}, we have
\begin{align}
\norm{\sqrt{\one+\delta+\Sigma_0}-\sqrt{\one+\Sigma_0}}_\text{HS}^2&\le\tr\abs{(\one+\delta+\Sigma_0)-(\one+\Sigma_0)}
=\tr\abs{\delta}\,,
\end{align}
so condition \eqref{condition2} is also verified. (This observation has been made before, cf.~\cite[Appendix B]{Buchholz1974} or \cite[Section 2.1]{conti2024quasi}.) Note that our conditions on \(\delta\) may not be strong enough to have a well-defined modular operator.

To avoid confusion, we will take all square roots w.r.t. the reference inner product \(\mu_0\) and we use the notation \(R_{\delta}\) as introduced at the beginning of this section, but omitting the unitary \(V_{\delta}\), which should not lead to any confusion. We then start with the following basic estimate.
\begin{lemma}\label{lem:Restimate}
Let \(\mu_0\) be an inner product on a pre-symplectic space \((D,\sigma)\), defining a Gaussian state, and \(\mu_\delta\) a perturbation of \(\mu_0\) of the form~\eqref{deltascalarproduct}. If \(\delta\geq0\) and \(\norm{\delta}_p<\infty\) for some \(p\in[1,\infty]\), then
\begin{align}
\norm{R_{\delta}-R_0}_p&\leq \norm{\delta}_p \,.\label{additionalestimate}
\end{align}
\end{lemma}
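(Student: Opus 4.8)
The plan is to exploit the explicit formula $R_\delta=(\one+\delta)^{-\frac12}R_0(\one+\delta)^{-\frac12}$ obtained at the beginning of this section (with the unitary $V_\delta$ suppressed, as announced above) and to reduce the estimate to a scalar inequality handled by functional calculus.

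First I would set $S:=(\one+\delta)^{-\frac12}$. Since $\delta\ge0$ this operator is bounded with $0\le S\le\one$, and since $R_0$ is a polarisation operator one has $\norm{R_0}\le1$ in operator norm. I would then write
\begin{align}
R_\delta-R_0=SR_0S-R_0=(S-\one)R_0S+R_0(S-\one)\,,\notag
\end{align}
and estimate each summand using submultiplicativity of the Schatten $p$-norm under left and right multiplication by bounded operators, together with $\norm{R_0}\le1$ and $\norm{S}\le1$, to get
\begin{align}
\norm{R_\delta-R_0}_p\le\norm{(S-\one)R_0S}_p+\norm{R_0(S-\one)}_p\le2\,\norm{\one-(\one+\delta)^{-\frac12}}_p\,.\notag
\end{align}

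It then remains to show $\norm{\one-(\one+\delta)^{-\frac12}}_p\le\frac12\norm{\delta}_p$. The key input is the elementary scalar inequality $0\le 1-(1+x)^{-\frac12}\le\frac{x}{2}$ for all $x\ge0$, which holds because both bounds are attained with equality at $x=0$, the middle expression has derivative $\frac12(1+x)^{-3/2}\le\frac12$, and $(1+x)^{-1/2}\le1$. Applying the spectral theorem to the positive operator $\delta$ turns this into the operator inequality $0\le\one-(\one+\delta)^{-\frac12}\le\frac12\delta$, and monotonicity of the Schatten $p$-norms under the operator order $0\le A\le B$ (trivial for $p=\infty$, and for $p<\infty$ via the min--max characterisation of eigenvalues) then gives the claim. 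Combining the two displayed estimates yields $\norm{R_\delta-R_0}_p\le\norm{\delta}_p$.

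I do not expect a genuine obstacle here. The only steps needing any care are the passage from the scalar to the operator inequality, which is routine functional calculus requiring only $\delta\ge0$, and the monotonicity of $p$-norms under the operator order; the boundedness of $S$ together with $\norm{S}\le1$ follows immediately from $\delta\ge0$, so no assumptions beyond those stated in the lemma enter.
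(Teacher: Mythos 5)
Your proof is correct. The opening decomposition
\(R_\delta-R_0=(S-\one)R_0S+R_0(S-\one)\) with \(S=(\one+\delta)^{-\frac12}\), followed by the reduction to \(2\norm{\one-(\one+\delta)^{-\frac12}}_p\), is exactly the paper's first step. Where you diverge is the final estimate \(\norm{\one-(\one+\delta)^{-\frac12}}_p\le\frac12\norm{\delta}_p\): the paper factors \(\one-(\one+\delta)^{-\frac12}=(\one+\delta)^{-\frac12}(\sqrt{\one+\delta}-\one)\) and then invokes the generalised van Hemmen--Ando inequality (Theorem~\ref{vanHemmenAndoinequality}) with \(\alpha=2\), since \(\sqrt{\one+\delta}+\sqrt{\one}\ge2\one\), to get \(2\norm{\sqrt{\one+\delta}-\one}_p\le\norm{\delta}_p\). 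You instead use the scalar bound \(0\le1-(1+x)^{-\frac12}\le\frac{x}{2}\), promoted by functional calculus to \(0\le\one-(\one+\delta)^{-\frac12}\le\frac12\delta\), and then monotonicity of Schatten norms under the operator order. Your route is legitimate and more elementary precisely because here the two operators being compared are both functions of the same positive operator \(\delta\), so no genuine non-commutative square-root comparison is needed; the van Hemmen--Ando machinery the paper cites is only really indispensable for non-commuting pairs. Both arguments produce the same constant, and both correctly isolate \(\delta\ge0\) as the hypothesis that yields the factor \(\frac12\) (the paper notes that dropping it degrades the bound to \(2\|(\one+\delta)^{-\frac12}\|\norm{\delta}_p\), consistent with your derivative computation).
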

\begin{proof}
Using \(\norm{R_0}\le1\),
\(\|(\one+\delta)^{-\frac12}\|\le1\),
\(\norm{AX}_p\le\norm{A}\norm{X}_p\) and the triangle inequality for the \(p\)-norm, we find
\begin{align}
\norm{R_{\delta}-R_0}_p&
=
\norm{((\one+\delta)^{-\frac12}-\one)R_0(\one+\delta)^{-\frac12}-R_0(\one-(\one+\delta)^{-\frac12})}_p\notag\\
&\le 2\norm{\one-(\one+\delta)^{-\frac12}}_p\notag\\
&\le 2\norm{(\one+\delta)^{-\frac12}}
\norm{\sqrt{\one+\delta}-\one}_p\notag\\
&\le\norm{\delta}_p\notag
\end{align}
by Theorem \ref{vanHemmenAndoinequality} with \(\alpha=2\).
\end{proof}
If we only assume 
\(\one+\delta>0\) instead of \(\delta\ge0\), then we can still apply Theorem \ref{vanHemmenAndoinequality} with \(\alpha=1\), showing that 
\(\norm{R_{\delta}-R_0}_p\leq 2\|(\one+\delta)^{-\frac12}\|\norm{\delta}_p\).

Sometimes we will also assume that \(R_0^{-1}\) is bounded, which is not implied by our previous assumptions, but, as we will discuss in Section \ref{sec:scalarfields}, this holds in some relevant cases.
Regarding some of the functions that appear in Longo's analysis, cf. Corollary \ref{Longotheorem2}, we then get the following results.\footnote{Strictly speaking, to compare with the results in \cite{longo2022modular} we would need to replace all functions \(X(R_{\delta})\) by \((\one+\delta)^{-\frac12}X(R_{\delta})(\one+\delta)^{\frac12}\) in order to account for the change in inner product. Note, however, that \((\one+\delta)^{\frac12}-\one\) is trace-class in our setting.}
\begin{theorem}\label{QEestimatestheorem}
Let \(\mu_0\) be an inner product on a pre-symplectic space \((D,\sigma)\), defining a Gaussian state, and \(\mu_\delta\) a perturbation of \(\mu_0\) of the form~\eqref{deltascalarproduct}. If \(\delta\geq0\) and \(\tr\delta<\infty\), then,
\begin{equation}
\norm{\sqrt{\one+R_{\delta}^2}-\sqrt{\one+R_0^2}}_\textup{HS}^2 \leq2\tr\delta \, .\label{QEcondition3estimate}
\end{equation}
If we also assume \(R_0^{-1}\) to be bounded, then
\begin{align}
\tr \abs{R_{\delta}^{-1}-R_0^{-1}} &\leq 2\norm{R_0^{-1}} \tr \delta \, ,\label{QEcondition1estimate}\\
\norm{R_{\delta}^{-1}\sqrt{\one+R_{\delta}^2}-R_0^{-1}\sqrt{\one+R_0^2}}_\textup{HS}^2 &\leq 4\,\norm{R_0^{-1}}^2\left( \tr\delta +2\left( \tr\delta\right)^2\right)\, .\label{QEcondition2estimate}
\end{align}
\end{theorem}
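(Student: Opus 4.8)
The plan is to establish the three estimates separately, in each case reducing the left-hand side to an expression that Lemma~\ref{lem:Restimate} controls, combined with the Powers--St\o rmer inequality (Theorem~\ref{powersstormer}) and the operator AM--GM inequality (Theorem~\ref{AMGM}). Throughout, all operators live on \(\K_{\Rbb,0}\) (or its complexification) with square roots taken w.r.t.\ \(\mu_0\), and the key scalar input is that \(\delta\ge0\), \(\tr\delta<\infty\) together with Lemma~\ref{lem:Restimate}, which gives \(\norm{R_\delta-R_0}_p\le\norm{\delta}_p\) for all \(p\in[1,\infty]\); in particular \(\norm{R_\delta-R_0}_{\text{HS}}\le\tr\delta\) (since \(\norm{\cdot}_{\text{HS}}=\norm{\cdot}_2\le\norm{\cdot}_1\)) and \(\norm{R_\delta-R_0}_1\le\tr\delta\).

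For \eqref{QEcondition3estimate}, I would apply the Powers--St\o rmer inequality with \(A=\one+R_\delta^2\) and \(B=\one+R_0^2\), which are both positive by \eqref{Radditionalcondition}: this bounds the left-hand side by \(\tr\abs{R_\delta^2-R_0^2}\). Then write \(R_\delta^2-R_0^2=\tfrac12\{R_\delta-R_0,R_\delta+R_0\}\) (anticommutator) or more simply \(R_\delta^2-R_0^2=(R_\delta-R_0)R_\delta+R_0(R_\delta-R_0)\), and use \(\tr\abs{XY}\le\norm{X}_1\norm{Y}\) together with \(\norm{R_\delta},\norm{R_0}\le1\) to get \(\tr\abs{R_\delta^2-R_0^2}\le2\norm{R_\delta-R_0}_1\le2\tr\delta\). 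That gives \eqref{QEcondition3estimate} directly.

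For \eqref{QEcondition1estimate}, assume now \(R_0^{-1}\) bounded. Since \(\delta\ge0\) implies (via \eqref{eqn:VSigmaV} and the identification) that \(R_\delta^{-1}\) is likewise bounded — this needs a small remark, that \(R_\delta^{-1}=(\one+\delta)^{1/2}R_0^{-1}(\one+\delta)^{1/2}\) so \(\norm{R_\delta^{-1}}\) is finite but possibly larger than \(\norm{R_0^{-1}}\); I should double-check the constant here, but the cleaner route is the resolvent-type identity \(R_\delta^{-1}-R_0^{-1}=R_\delta^{-1}(R_0-R_\delta)R_0^{-1}=R_0^{-1}(R_0-R_\delta)R_\delta^{-1}\). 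To get the stated bound with the single factor \(\norm{R_0^{-1}}\) rather than \(\norm{R_0^{-1}}\norm{R_\delta^{-1}}\), I would instead use Theorem~\ref{AMGM}: writing \(R_\delta^{-1}-R_0^{-1}\) symmetrically and recognising the structure \(\sqrt{A}X\sqrt{B}\) with \(A=B\) built from \(R_0^{-1}\) and \((\one+\delta)^{\pm 1}\). Concretely, from \(R_\delta^{-1}=(\one+\delta)^{1/2}R_0^{-1}(\one+\delta)^{1/2}\) one has \(R_\delta^{-1}-R_0^{-1}=(\one+\delta)^{1/2}R_0^{-1}(\one+\delta)^{1/2}-R_0^{-1}\); split off \(\delta\) and apply AM--GM to the cross terms, bounding each by \(\norm{R_0^{-1}}\tr\delta\) and collecting a factor \(2\). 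This is the step I expect to require the most care in getting the constant exactly \(2\norm{R_0^{-1}}\).

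For \eqref{QEcondition2estimate}, I would split
\begin{equation}
R_\delta^{-1}\sqrt{\one+R_\delta^2}-R_0^{-1}\sqrt{\one+R_0^2}
=(R_\delta^{-1}-R_0^{-1})\sqrt{\one+R_\delta^2}+R_0^{-1}\bigl(\sqrt{\one+R_\delta^2}-\sqrt{\one+R_0^2}\bigr)\,,
\end{equation}
take the HS norm, apply the triangle inequality and \((a+b)^2\le 2a^2+2b^2\). The first term is bounded using \(\norm{\sqrt{\one+R_\delta^2}}\le\sqrt2\) (since \(\norm{R_\delta}\le1\)) and \(\norm{R_\delta^{-1}-R_0^{-1}}_{\text{HS}}\le\norm{R_\delta^{-1}-R_0^{-1}}_1\le2\norm{R_0^{-1}}\tr\delta\) from \eqref{QEcondition1estimate}, giving a contribution \(\le 2\cdot 2\cdot (2\norm{R_0^{-1}}\tr\delta)^2=16\norm{R_0^{-1}}^2(\tr\delta)^2\); the second term is bounded by \(2\norm{R_0^{-1}}^2\norm{\sqrt{\one+R_\delta^2}-\sqrt{\one+R_0^2}}_{\text{HS}}^2\le 2\norm{R_0^{-1}}^2\cdot 2\tr\delta=4\norm{R_0^{-1}}^2\tr\delta\) by \eqref{QEcondition3estimate}. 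Summing gives \(\le 4\norm{R_0^{-1}}^2(\tr\delta+4(\tr\delta)^2)\), which is slightly weaker than the claimed \(4\norm{R_0^{-1}}^2(\tr\delta+2(\tr\delta)^2)\); to recover the sharper constant I would avoid the crude \(\norm{\cdot}_{\text{HS}}\le\norm{\cdot}_1\) step and instead feed \(\norm{R_\delta-R_0}_{\text{HS}}\le\tr\delta\) directly into a more careful expansion of \(R_\delta^{-1}-R_0^{-1}\) in HS norm (bounding \(\norm{(\one+\delta)^{1/2}-\one}_{\text{HS}}\) by \(\tr\delta\) via van Hemmen--Ando and using AM--GM on the remaining cross term), which is the main bookkeeping obstacle in this last estimate.
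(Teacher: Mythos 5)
Your treatment of \eqref{QEcondition3estimate} is correct and is essentially the paper's proof (Powers--St{\o}rmer, a factorisation of \(R_\delta^2-R_0^2\), and Lemma~\ref{lem:Restimate} with \(p=1\)); note only that \(\one+R_\delta^2\ge0\) holds for \emph{any} polarisation operator because \(R_\delta^2=-R_\delta^*R_\delta\ge-\one\), not because of \eqref{Radditionalcondition}, which this theorem deliberately does not assume. The genuine gap is in \eqref{QEcondition1estimate}: you correctly identify \(R_\delta^{-1}=\sqrt{\one+\delta}\,R_0^{-1}\sqrt{\one+\delta}\), but your plan (``apply AM--GM to the cross terms, bounding each by \(\norm{R_0^{-1}}\tr\delta\) and collecting a factor \(2\)'') does not close. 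Writing \(S:=\sqrt{\one+\delta}\), the difference decomposes as \((S-\one)R_0^{-1}(S-\one)+R_0^{-1}(S-\one)+(S-\one)R_0^{-1}\); if each cross term already costs \(\norm{R_0^{-1}}\tr\delta\), the budget is exhausted before the quadratic term is touched. The correct accounting, as in the paper's \eqref{QEcondition1}--\eqref{QEcondition1cont2}, is: Theorem~\ref{AMGM} is applied to the \emph{quadratic} term, giving \(\tr\abs{(S-\one)R_0^{-1}(S-\one)}\le\norm{R_0^{-1}}\tr(S-\one)^2\le\norm{R_0^{-1}}\tr\delta\) (since \(\tr(S-\one)^2=\tr\delta-2\tr(S-\one)\le\tr\delta\)), while each cross term contributes only \(\tfrac12\norm{R_0^{-1}}\tr\delta\), because Theorem~\ref{vanHemmenAndoinequality} with \(\alpha=2\) (using \(S+\one\ge2\one\)) gives \(\tr\abs{S-\one}\le\tfrac12\tr\delta\). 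Without that factor \(\tfrac12\) the constant \(2\norm{R_0^{-1}}\) is unreachable by your route.

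For \eqref{QEcondition2estimate} your decomposition is the paper's, and you correctly notice that your constants come out as \(4\norm{R_0^{-1}}^2\left(\tr\delta+4(\tr\delta)^2\right)\) instead of \(4\norm{R_0^{-1}}^2\left(\tr\delta+2(\tr\delta)^2\right)\), but you misdiagnose where the loss occurs. The step \(\norm{R_\delta^{-1}-R_0^{-1}}_{\textup{HS}}\le\tr\abs{R_\delta^{-1}-R_0^{-1}}\) is exactly what the paper uses (its \eqref{QEcondition2cont}), so avoiding it will not help. The factor of \(2\) is lost in the operator-norm bound \(\norm{\sqrt{\one+R_\delta^2}}\le\sqrt2\): since \(R_\delta\) is anti-selfadjoint, \(R_\delta^2=-R_\delta^*R_\delta\le0\), hence \(\one+R_\delta^2\le\one\) and \(\norm{\sqrt{\one+R_\delta^2}}\le1\). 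With this, the first term is bounded by \(2\cdot\left(2\norm{R_0^{-1}}\tr\delta\right)^2=8\norm{R_0^{-1}}^2(\tr\delta)^2\) and the stated constant follows; your proposed repair via a more careful Hilbert--Schmidt expansion of \(R_\delta^{-1}-R_0^{-1}\) is unnecessary and would not by itself recover it.
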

\begin{proof}
Let's start with condition~\eqref{Longocondition3}. By applying Theorem~\ref{powersstormer},
\begin{equation}\label{PowersStormerdelta}
\norm{\sqrt{\one+R_{\delta}^2}-\sqrt{\one+R_0^2}}_\text{HS}^2\leq\tr\abs{  \one+R_{\delta}^2-\left( \one+R_0^2\right) }=\tr\abs{ R_{\delta}^2-R_0^2}  \, .
\end{equation}
Writing
\(R_{\delta}^2-R_0^2=\frac12(R_{\delta}-R_0)(R_{\delta}+R_0)+\frac12(R_{\delta}+R_0)(R_{\delta}-R_0)\) and using \(\norm{R_{\delta}+R_0}\le2\) and the triangle inequality for the trace norm gives 
\begin{align}
\tr\abs{ R_{\delta}^2-R_0^2}&\le2\tr\abs{R_{\delta}-R_0}\,.\notag
\end{align}
Together with (\ref{PowersStormerdelta}) and Lemma \ref{lem:Restimate} for \(p=1\), this proves the estimate~\eqref{QEcondition3estimate}.

\medskip

Now assuming \(R_0^{-1}\) to be bounded to prove~\eqref{QEcondition1estimate} we have
\begin{equation}\label{QEcondition1}
	\begin{aligned}
		\tr \abs{R_{\delta}^{-1}-R_0^{-1}}&=	\tr \abs{\sqrt{\one+\delta} \ R_0^{-1}  \sqrt{\one+\delta}-R_0^{-1}}\\
		&=\tr \Big| \left( \sqrt{\one+\delta} -\one\right)  R_0^{-1}  \left( \sqrt{\one+\delta} -\one\right)\\
		&\phantom{=}+R_0^{-1}  \left( \sqrt{\one+\delta} -\one\right)+\left( \sqrt{\one+\delta} -\one\right)  R_0^{-1} \Big|\\
		&\leq \tr \abs{\left( \sqrt{\one+\delta} -\one\right)  R_0^{-1}  \left( \sqrt{\one+\delta} -\one\right)	}\\
		&\phantom{\leq}+2\norm{R_0^{-1}}\tr\abs{ \sqrt{\one+\delta} -\one}\, .
	\end{aligned}
\end{equation}
We can estimate the first term by Theorem~\ref{AMGM},
\begin{equation}\label{QEcondition1cont1}
\begin{aligned}
&\tr \abs{\left( \sqrt{\one+\delta} -\one \right)  R_0^{-1}  \left( \sqrt{\one+\delta} -\one \right)	}\\
\leq \frac{1}{2}&\tr\abs{ \left( \sqrt{\one+\delta} -\one\right)^2 R_0^{-1}+R_0^{-1} \left( \sqrt{\one+\delta} -\one \right)^2} \, ,
\end{aligned}
\end{equation}
and again from above by
\begin{equation}\label{QEcondition1cont2}
\norm{R_0^{-1}}\tr \left( \sqrt{\one+\delta} -\one \right)^2=\norm{R_0^{-1}} \tr\left( \delta -2\left(  \sqrt{\one+\delta} -\one\right) \right) \leq \norm{R_0^{-1}} \tr \delta \, ,
\end{equation}
as \(\delta\geq0\). The last term in~\eqref{QEcondition1} can be estimated by \(\norm{R_0^{-1}}\tr\delta\) as in the proof of Lemma \ref{lem:Restimate} for \(p=1\). Together with~\eqref{QEcondition1},~\eqref{QEcondition1cont1} and~\eqref{QEcondition1cont2}, this gives us~\eqref{QEcondition1estimate}.

\medskip

As noted in  Remark~\ref{Longoremark}, \eqref{QEcondition2estimate} is already implied by~\eqref{QEcondition3estimate},~\eqref{QEcondition1estimate} and the boundedness of \(R_0^{-1}\). We get
\begin{equation}\label{QEcondition2}
\hspace{-.14cm}
\begin{aligned}
&\norm{R_{\delta}^{-1}\sqrt{\one+R_{\delta}^2}-R_0^{-1}\sqrt{\one+R_0^2}}_\text{HS}^2\\ 
=\, &\Big\| \left( R_{\delta}^{-1}-R_0^{-1}\right) \sqrt{\one+R_{\delta}^2}+
R_0^{-1}\left( \sqrt{\one+R_{\delta}^2}-\sqrt{\one+R_0^2}\right) \Big\|_\text{HS}^2\\
\leq2\,&\norm{\sqrt{\one+R_{\delta}^2}}^2\norm{R_{\delta}^{-1}-R_0^{-1}}_\text{HS}^2+2\,\norm{R_0^{-1}}^2\norm{\sqrt{\one+R_{\delta}^2}-\sqrt{\one+R_0^2}}_\text{HS}^2 \, ,
\end{aligned}
\end{equation}
where \(\norm{\sqrt{\one+R_{\delta}^2}}^2\leq 1\) (since \(R_{\delta}^2\le0\)) and \(\norm{\sqrt{\one+R_{\delta}^2}-\sqrt{\one+R_0^2}}_\text{HS}^2 \leq 2 \tr \delta\) by equation~\eqref{QEcondition3estimate}. The Hilbert-Schmidt norm of \(R_{\delta}^{-1}-R_0^{-1}\) can be obtained from its trace norm~\eqref{QEcondition1estimate},
\begin{equation}\label{QEcondition2cont}
\norm{R_{\delta}^{-1}-R_0^{-1}}_\text{HS}^2\leq \left( \tr\abs{R_{\delta}^{-1}-R_0^{-1}}\,\right)^2\leq 4\,\norm{R_0^{-1}}^2 \left(  \tr \delta\right)^2 \,.
\end{equation}
Together with~\eqref{QEcondition2}  
we get~\eqref{QEcondition2estimate}.
\end{proof}
\begin{remark}
These estimates were made in terms of \(\tr\delta\), avoiding \(\norm{\delta}\), but this is not the only possible approach.
E.g., one could also estimate the left-hand side of equation~\eqref{QEcondition2cont} from above by \(2\left( 1+\|\one+\delta\|\right) \|R_0^{-1}\|^2\tr\delta\), which avoids the square of the trace of \(\delta\), but introduces its norm.
\end{remark}

It is relevant to note that up to this point we did not make use of condition~\eqref{Radditionalcondition}, or, equivalently, \(\one+\Sigma_{\delta}>0\), necessary to have a well-defined modular operator. Indeed, the class of quasi-equivalent states we have considered contains states which admit modular operators as well as states which do not. If we require the Gaussian state \(\mu_\delta\) to satisfy the condition \(\one+R_{\delta}^2>0\),  the estimate of Lemma \ref{lem:Restimate} for \(p=1\) can equivalently be written as
\begin{align}
\tr\abs{-i \tanh\left( K_{\delta}/2\right)-R_0}
&\le\tr \delta \,.\label{additionalestimate2}
\end{align}
Similarly, we can express the estimates of Theorem~\ref{QEestimatestheorem} in terms of modular Hamiltonians.
\begin{corollary}\label{cor:estimates}
Let \(\mu_0\), \(\mu_\delta\) satisfy the hypotheses of Theorem~\ref{QEestimatestheorem}, with \(\delta\geq0\), \(\tr\delta<\infty\) and \(\one+R_{\delta}^2>0\). Then, 
\begin{equation}
\norm{\frac{1}{\cosh\left( K_{\delta}/2\right)}-\sqrt{\one+R_0^2}\, }_\textup{HS}^2 \leq2\tr\delta \, .
\end{equation}
If \(R_0^{-1}\) is bounded,
\begin{align}
\tr \abs{i\coth\left( K_{\delta}/2\right)-R_0^{-1}} \quad &\leq 2\norm{R_0^{-1}} \tr \delta \, ,\\
\norm{\,i\frac{1}{\sinh\left( K_{\delta}/2\right)}-R_0^{-1}\sqrt{\one+R_0^2}\,}_\textup{HS}^2 &\leq 4\,\norm{R_0^{-1}}^2\left( \tr\delta +2\left( \tr\delta\right)^2\right)\,.
\end{align}
\end{corollary}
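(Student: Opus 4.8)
The plan is to combine the functional-calculus identification of the perturbed polarisation operator $R_\delta$ with the modular Hamiltonian $K_\delta$ --- the very identification that rewrites Lemma~\ref{lem:Restimate} for $p=1$ as~\eqref{additionalestimate2} --- with the Hilbert--Schmidt and trace-norm bounds already proved in Theorem~\ref{QEestimatestheorem}. Recall from the computation leading to~\eqref{deltaRidentification} and the paragraph around~\eqref{eqn:VSigmaV} that, once the unitary $V_\delta U_\delta$ is suppressed as fixed before Lemma~\ref{lem:Restimate}, the hypothesis $\one+R_\delta^2>0$ (equivalently $\one+\Sigma_\delta>0$) is exactly what makes $K_\delta=-\log\Delta_\delta$ well defined and gives $\Sigma_\delta=iR_\delta=\tanh(K_\delta/2)$, i.e. $R_\delta=-i\tanh(K_\delta/2)$. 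Since $R_\delta$ is anti-selfadjoint, $\Sigma_\delta$ is self-adjoint with spectrum in $[-1,1]$, and the endpoints $\pm1$ are excluded from its spectrum by $\one+R_\delta^2=\one-\Sigma_\delta^2>0$; so every function of $K_\delta$ appearing below is a bounded Borel function of a self-adjoint operator, and the substitutions are legitimate.

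First I would treat~\eqref{QEcondition3estimate}. By functional calculus $\one+R_\delta^2=\one-\tanh^2(K_\delta/2)=\cosh^{-2}(K_\delta/2)$, and since $\cosh\ge1>0$ its positive square root (taken, as everywhere, with respect to $\mu_0$) is $\sqrt{\one+R_\delta^2}=\frac{1}{\cosh(K_\delta/2)}$. Inserting this into~\eqref{QEcondition3estimate} gives the first inequality of the corollary.

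Next, under the extra hypothesis that $R_0^{-1}$ is bounded, so is $R_\delta^{-1}=(\one+\delta)^{1/2}R_0^{-1}(\one+\delta)^{1/2}$ --- the identity already used in the proof of~\eqref{QEcondition1estimate}. In modular terms this says $0\notin\operatorname{spec}(\Sigma_\delta)=\operatorname{spec}(\tanh(K_\delta/2))$, hence $0\notin\operatorname{spec}(K_\delta)$, so $\coth(K_\delta/2)$ and $\frac{1}{\sinh(K_\delta/2)}$ are bounded as well ($H$ is a factorial standard subspace of $\H_\delta$). Inverting $R_\delta=-i\tanh(K_\delta/2)$ gives $R_\delta^{-1}=i\coth(K_\delta/2)$, which, substituted in~\eqref{QEcondition1estimate}, yields the second inequality; and multiplying the two spectral identities, $R_\delta^{-1}\sqrt{\one+R_\delta^2}=i\coth(K_\delta/2)\cdot\frac{1}{\cosh(K_\delta/2)}=i\,\frac{1}{\sinh(K_\delta/2)}$, which, substituted in~\eqref{QEcondition2estimate}, yields the third.

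Nothing here is a genuine obstacle; the only point that needs attention is bookkeeping. The functions of $K_\delta$ a priori act on the perturbed one-particle space $\H_\delta$, whereas $R_0$ and its functions act on $\K_0$, so before forming operator differences and measuring their $\mathcal{L}^2$- or $\mathcal{L}^1$-norms one must transport the former to $\K_0$ via the unitary $V_\delta U_\delta$ --- precisely the map suppressed by the convention adopted before Lemma~\ref{lem:Restimate}, and exactly as already done in~\eqref{additionalestimate2}. As these norms are unitarily invariant, this transport leaves the bounds unchanged, so the three inequalities of the corollary drop out of~\eqref{QEcondition3estimate},~\eqref{QEcondition1estimate} and~\eqref{QEcondition2estimate} via the substitutions above.
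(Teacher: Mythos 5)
Your proposal is correct and follows essentially the same route as the paper: identify $R_{\delta}=-i\tanh(K_{\delta}/2)$ via \eqref{deltaRidentification}, deduce $\sqrt{\one+R_{\delta}^2}=\cosh(K_{\delta}/2)^{-1}$, $R_{\delta}^{-1}=i\coth(K_{\delta}/2)$ and $R_{\delta}^{-1}\sqrt{\one+R_{\delta}^2}=i\sinh(K_{\delta}/2)^{-1}$ by functional calculus, and substitute into the three estimates of Theorem~\ref{QEestimatestheorem}. Your extra remarks on the unitary transport and on the boundedness of $R_{\delta}^{-1}=\sqrt{\one+\delta}\,R_0^{-1}\sqrt{\one+\delta}$ are consistent with the conventions the paper fixes before Lemma~\ref{lem:Restimate}.
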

\begin{proof}
Note that \(\one+R_{\delta}^2>0\) iff \(\one+\Sigma_{\delta}>0\) (using \(\Gamma\Sigma\Gamma=-\Sigma\)), so we see from \eqref{deltaRidentification} that \(R_{\delta}=-i\tanh\left( K_{\delta}/2\right)\) is well-defined and we get the functions \(\sqrt{\one+R_{\delta}^2}= \left(\cosh\left( K_{\delta}/2\right) \right)^{-1}\), \(R_{\delta}^{-1} =i\coth\left( K_{\delta}/2\right) \) and \(R_{\delta}^{-1}\sqrt{\one+R_{\delta}^2}=i \left(\sinh\left( K_{\delta}/2\right)\right)^{-1}\). Then, the results follow from the estimates of Theorem~\ref{QEestimatestheorem}.
\end{proof}

\medskip

One possible extension of this discussion comes from analysing differences of more general functions of polarisation operators. 

\begin{theorem}
Under the assumptions of Theorem \ref{QEestimatestheorem}, we have
\begin{equation*}
\norm{f(\Sigma_{\delta})-f(\Sigma_0)}_\textup{HS}\leq k \, \norm{\delta}_\textup{HS} \, .
\end{equation*}
for any Lipschitz continuous function \(f\) on 
\([-1,1]\) with Lipschitz constant \(k\).
\end{theorem}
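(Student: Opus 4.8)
The plan is to reduce the claimed Hilbert–Schmidt bound for $f(\Sigma_\delta)-f(\Sigma_0)$ to the already-established operator-level estimate $\|\Sigma_\delta-\Sigma_0\|_{\mathrm{HS}}\le C\,\|\delta\|_{\mathrm{HS}}$ via a Lipschitz-function perturbation inequality for self-adjoint operators. Recall that $\Sigma_0$ and $\Sigma_\delta$ (with the unitary $V_\delta$ absorbed, as in the paragraph before Lemma~\ref{lem:Restimate}) are self-adjoint contractions on $\K_0$, so their spectra lie in $[-1,1]$, which is exactly the domain on which $f$ is assumed Lipschitz. First I would record that, since $R_\delta=-i\Sigma_\delta$ and $R_0=-i\Sigma_0$, Lemma~\ref{lem:Restimate} gives directly $\|\Sigma_\delta-\Sigma_0\|_p=\|R_\delta-R_0\|_p\le\|\delta\|_p$ for every $p\in[1,\infty]$, in particular for $p=2$; so we already have $\|\Sigma_\delta-\Sigma_0\|_{\mathrm{HS}}\le\|\delta\|_{\mathrm{HS}}$.

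The key analytic input is the following classical fact (going back to work on Lipschitz estimates for operator functions in Schatten norms; see Birman–Solomyak, or Kittaneh, whose inequalities are already cited in the paper): if $f:[-1,1]\to\mathbb{R}$ is Lipschitz with constant $k$ and $A,B$ are self-adjoint with spectrum in $[-1,1]$, then $\|f(A)-f(B)\|_{\mathrm{HS}}\le k\,\|A-B\|_{\mathrm{HS}}$. (Unlike the operator-norm case, there is \emph{no} logarithmic loss in the Hilbert–Schmidt norm: this is the Birman–Solomyak theorem.) The cleanest self-contained route is to extend $f$ to a Lipschitz function on $\mathbb{R}$ with the same constant $k$ (McShane extension: $\tilde f(x):=\inf_{y}\big(f(y)+k|x-y|\big)$), which does not change $f(A)$ or $f(B)$ since the spectra lie in $[-1,1]$. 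One then uses the double-operator-integral / spectral-decomposition argument: writing $A=\sum_j a_j P_j$, $B=\sum_\ell b_\ell Q_\ell$ (or the integral analogue over the spectral measures), one has
\begin{align*}
\|f(A)-f(B)\|_{\mathrm{HS}}^2=\sum_{j,\ell}\left|\frac{\tilde f(a_j)-\tilde f(b_\ell)}{a_j-b_\ell}\right|^2\,\|P_j(A-B)Q_\ell\|_{\mathrm{HS}}^2\le k^2\sum_{j,\ell}\|P_j(A-B)Q_\ell\|_{\mathrm{HS}}^2=k^2\|A-B\|_{\mathrm{HS}}^2,
\end{align*}
where the middle difference quotient is understood to be the appropriate diagonal value when $a_j=b_\ell$ and is bounded by $k$ in absolute value by the Lipschitz property, and the outer equality is the Parseval identity $\sum_{j,\ell}\|P_j X Q_\ell\|_{\mathrm{HS}}^2=\|X\|_{\mathrm{HS}}^2$.

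Putting the two pieces together with $A=\Sigma_\delta$, $B=\Sigma_0$, $X=\Sigma_\delta-\Sigma_0$ yields
\begin{align*}
\|f(\Sigma_\delta)-f(\Sigma_0)\|_{\mathrm{HS}}\le k\,\|\Sigma_\delta-\Sigma_0\|_{\mathrm{HS}}\le k\,\|\delta\|_{\mathrm{HS}},
\end{align*}
which is exactly the claim. The main obstacle is the justification of the double-operator-integral step when $\Sigma_0,\Sigma_\delta$ have continuous spectrum: the sums above must be replaced by integrals against the product of the two spectral measures, and one must check measurability and $L^2$-boundedness of the kernel $\phi(x,y)=(\tilde f(x)-\tilde f(y))/(x-y)$. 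This is standard once $\tilde f$ is Lipschitz on all of $\mathbb{R}$ (the kernel is bounded by $k$ everywhere, so the associated double-operator-integral transformer is a contraction on $\mathcal{L}^2$, being a Schur-type multiplier with bounded symbol), but it is the point where genuine care is needed rather than bookkeeping; alternatively one may simply cite the Birman–Solomyak theorem directly. A minor point to mention is that passing to the McShane extension is legitimate precisely because $\sigma(\Sigma_0),\sigma(\Sigma_\delta)\subset[-1,1]$, so that $\tilde f$ and $f$ induce the same operators.
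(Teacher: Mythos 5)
Your proposal is correct and follows essentially the same route as the paper: the paper also reduces the claim to the Hilbert--Schmidt Lipschitz estimate $\norm{f(\Sigma_{\delta})-f(\Sigma_0)}_\textup{HS}\leq k\norm{\Sigma_{\delta}-\Sigma_0}_\textup{HS}$ for self-adjoint operators (citing Kittaneh's Corollary 2, which is the same Birman--Solomyak-type result you sketch via double operator integrals) and then applies Lemma~\ref{lem:Restimate} with $p=2$. The only difference is that you sketch a proof of the key inequality rather than citing it, which is a matter of presentation, not substance.
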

\begin{proof}
Because \(\Sigma_{\delta}\) and \(\Sigma_0\) are self-adjoint, it follows immediately from \cite[Corollary 2]{kittaneh1985lipschitz} that
\begin{equation*}
\norm{f(\Sigma_{\delta})-f(\Sigma_0)}_\textup{HS}\leq k \, \norm{\Sigma_{\delta}-\Sigma_0}_\text{HS} \, .
\end{equation*}
Lemma \ref{lem:Restimate} for \(p=2\) then yields the result.
\end{proof}

Even more generally, results of the form
\begin{equation*}
	\norm{f(A)-f(B)}_p\leq C \, \norm{A-B}_p \, ,
\end{equation*}
with \(C>0\) finite, are known to exist for Lipschitz functions for \(1<p<\infty\)~\cite{potapov2011operator,caspers2014best}.
Similarly, for \(\alpha\)-Hölder functions with \(0<\alpha<1\), one has that
\begin{equation*}
	\norm{f(A)-f(B)}_p\leq C \, \norm{\abs{A-B}^\alpha}_p
\end{equation*}
for any \(p>0\)~\cite{aleksandrov2010functions,huang2022operatorthetaholderfunctionsrespect}, with \(C>0\) finite. This allows us to include at least one of the functions of Theorem~\ref{QEestimatestheorem}, which are not Lipschitz continous, but \(f(x)=\sqrt{1-x^2}\) is \(\frac12\)-Hölder continuous. Unfortunately, the constant \(C\) in these general estimates for Lipschitz and Hölder continuous functions is in general unknown and it may not equal the Lipschitz or Hölder constant.


\section{Applications to a real linear scalar field}\label{sec:scalarfields}

In this section, we will specialise to a real linear scalar field \(\phi\) in an ultrastatic spacetime \(M=(\Rbb\times\mathcal{C},-dt^2+h)\), where \((\mathcal{C},h)\) is a complete Riemannian manifold. We consider the Klein-Gordon equation \((-\Box+m^2)\,\phi=0\), which can be written in the form \((\partial_t^2+A_m)\,\phi=0\), where \(A_m=-\Delta+m^2\) is a \(t\)-independent differential operator on \(\mathcal{C}\). For any open region \(\O\subseteq\mathcal{C}\), the symplectic space \(\left( D^{(\O)},\sigma\right) \) is given by the space of initial data
\begin{align}
D^{(\O)}&=C_0^\infty\left(\O,\Rbb\right) \oplus C_0^\infty\left( \O,\Rbb\right) \,,\label{symplecticspacefreescalar}
\end{align}
with respect to the symplectic form
\begin{align}
\sigma\left( f,g\right)&=\int_{\O} \left( f_0 g_1-f_1 g_0\right) \dd{x} \,,\label{symplecticformfreescalar}
\end{align}
where \(f=\left( f_1,f_0\right) , g=\left( g_1,g_0\right) \in D^{(\O)}\). When \(\O=\mathcal{C}\) we will write \(D\) instead of \(D^{(\O)}\).

A Gaussian state \(\omega\) is determined by a two-point distribution \(\lambda_2\) on \(M^2\), which can equivalently be characterised by its \(2\times2\)-matrix of initial data on the Cauchy surface \(x_0=0\), given by
\begin{align}
\omega_2&=
\begin{pmatrix}
\omega_{2,00}&\omega_{2,01}\\
\omega_{2,10}&\omega_{2,11} 
\end{pmatrix}
=
\begin{pmatrix}
\lambda_2\left( x,y\right)|_{x^0=y^0=0} & \partial_{y_0} \lambda_2\left( x,y\right)|_{x^0=y^0=0}\\
\partial_{x_0} \lambda_2\left( x,y\right)|_{x^0=y^0=0} & \partial_{x_0}  \partial_{y_0}\lambda_2\left( x,y\right)|_{x^0=y^0=0} 
\end{pmatrix}\,.\notag
\end{align}
The initial data are related to an inner product \(\mu\) by equation \eqref{def:omega2}.

The classical stress tensor takes the form 
\begin{align}
T_{ab}\left[ \phi\right]&=\partial_a\phi\partial_b\phi-\frac{1}{2}g_{ab}\left( \eta^{cd}\partial_c\phi\partial_d\phi+m^2 \phi^2\right) \,,\notag
\end{align}
and the energy density is the \((00)\)-component,
\begin{align}
T_{00}\left[ \phi\right]&=\frac12\left((\partial_0\phi)^2+h^{ij}\partial_i\phi\partial_j\phi+m^2 \phi^2\right) \,.\notag
\end{align}
For two Gaussian states \(\omega,\omega'\) the difference between the energy densities can be defined by a point-splitting procedure, which leads to the expression
\begin{equation}\label{energydensity}
\begin{aligned}
\epsilon_{\omega',\omega}(x)&=\frac12\Big\lbrace \left(\omega_{2,11}\left( x,y\right) -\omega'_{2,11}\left( x,y\right)\right)+\\
&\hspace{.9cm}+\Big( \sum_{i,j=1}^{n}h^{ij}(x)\partial_{x_i} \partial_{y_j}+m^2\Big) \left( \omega_{2,00}\left( x,y\right) -\omega'_{2,00}\left( x,y\right) \right)\Big\rbrace \!\Big|_{x=y}\,,
\end{aligned}
\end{equation}
when the expression in braces is a continuous function of \(x\) and \(y\) at all points on the diagonal \(x=y\) of \(\mathcal{C}\). (Here we use local coordinates \((0,x_i)\), with \((x_i)\) local coordinates on \(\mathcal{C}\).)

\subsection{Perturbations of the Minkowski vacuum}\label{ssec:globalvacuum}

We first consider the case where \((\mathcal{C},h)\) is the \(n\)-dimensional Euclidean space \(\Rbb^n\) and \(\O=\Rbb^n\). (These results should generalise to vacuum states in general ultrastatic spacetimes. See also Sec. \ref{ssec:thermal}.)

The Minkowski vacuum is determined by \(\kappa:D\to L^2(\Rbb^n):(f_1,f_0)\mapsto A_m^{-\frac14}f_1-iA_m^{\frac14}f_0\), which yields
\begin{align}
\mu_{\text{vac}}\left( f,g\right):=&\left\langle f_1, A_m^{-\frac12}\, g_1\right\rangle_{L^2\left(\Rbb^n,\Rbb\right)}+
\left\langle f_0, A_m^{\frac12}\, g_0\right\rangle_{L^2\left( \Rbb^n, \Rbb\right)}
=\left\langle f,X^2\, g\right\rangle_{L^2\left( \Rbb^n, \Rbb\right)^{\oplus 2}}
\label{def:muvac}
\end{align}
with
\begin{align}
X&=\begin{pmatrix}
A_m^{-\frac14}&0\\
0&A_m^{\frac14}
\end{pmatrix}\,.\label{eqn:defX}
\end{align}
Comparing with the general notations of Section \ref{section:notations} there is an isometry \(U_{\text{vac}}: \K_{\text{vac},\Rbb}\longrightarrow L^2\left( \Rbb^n, \Rbb\right)^{\oplus 2}\) given by
\begin{align}
U_{\text{vac}}\left(f\right)&=Xf\,,\label{unitaryvacuum}
\end{align}
which extends to a unitary \(U_{\text{vac}}: \K_{\text{vac}}\longrightarrow L^2\left( \Rbb^n\right)^{\oplus 2}\) that we denote by the same symbol. Identifying \(\K_{\text{vac}}\) with \(L^2\left( \Rbb^n\right)^{\oplus 2}\) in this way we see from \eqref{symplecticformfreescalar} that the polarisation operator is given by
\begin{align}
R_{\text{vac}}&=U_{\text{vac}}^*
\begin{pmatrix}
0&-\one\\
\one&0
\end{pmatrix}
U_{\text{vac}}\,.\label{polarizationvacuum1}
\end{align}
Note that \(R_\text{vac}^2=-\one\). In particular, condition~\eqref{Radditionalcondition}, used to define a modular operator, is not satisfied here.

We now consider a deformation of the vacuum state in the sense of Section~\ref{sec:perturbations}. The operator \(\delta\) on
\(\K_{\text{vac},\Rbb}\) is related to an operator \(\delta_{L^2}=U_{\text{vac}}\,\delta \,U_{\text{vac}}^*\) on
\(L^2\left(\Rbb^n,\Rbb\right)\), so that
\begin{align}
\mu_{\delta}\left( f,g\right)
&=\left\langle f,X \left(\one+\delta_{L^2}\right)  X\, g\right\rangle_{L^2\left( \Rbb^n, \Rbb\right)^{\oplus 2}}\label{deltascalarproductfreescalar}
\end{align}
for all \(f,g\in D\). The difference of the two-point functions of the perturbed state \(\omega_{\delta}\) and the vacuum \(\omega_{\text{vac}}\) is then given by the initial data
\begin{align}
\omega_{2,\delta}\left( f,g\right) -\omega_{2,\text{vac}}\left( f,g\right)&=
\frac12\mu_{\delta}\left( f,g\right)-\frac12\mu_{\text{vac}}\left(f,g\right) 
=\frac12\left\langle f,X \delta_{L^2} X\, g\right\rangle_{L^2} \,,\label{deltatwopointfunction}
\end{align}
with \(f,g \in D\). We can write \(\delta_{L^2}\) as a \(2\times2\)-matrix 
\(\delta_{L^2}=\begin{pmatrix} \delta_{00} & \delta_{01}\\ \delta_{10} & \delta_{11} \end{pmatrix}\) acting on
\(L^2(\Rbb^n,\Rbb)^{\oplus2}\). 
For the difference in the energy densities we find the following result, where we only consider \(m>0\) for simplicity and we treat \(A_m^{\frac14}\) as a diagonal matrix, \(\left(\begin{smallmatrix} A_m^{\frac14} & 0\\ 0 & A_m^{\frac14}\end{smallmatrix}\right)\), on \(L^2(\Rbb^n,\Rbb)^{\oplus2}\).

\begin{proposition}\label{prop:Edelta}
Assume that \(m>0\) and that \(\delta_{L^2}\) is a bounded, positive operator on \(L^2(\Rbb^n,\Rbb)^{\oplus2}\) such that
 \(A_m^{\frac14}\delta_{L^2}A_m^{\frac14}\) is a trace-class operator. Then
\begin{align}
E_{\delta}&:=\frac14\tr \left(  A_m^{\frac14}\delta_{L^2}A_m^{\frac14}\right)
\ge \frac{m}{4}\tr \delta_{L^2}\notag\,.
\end{align}
If in addition the expression in braces in equation \eqref{energydensity} for \(\omega'=\omega_{\delta}\) and \(\omega=\omega_{\textup{vac}}\) is continuous in all points on the diagonal of \(\mathcal{C}\), then 
\begin{align}
E_{\delta}&=\int_{\Rbb^n}\epsilon_{\omega_{\delta},\omega_{\textup{vac}}}\dd^nx\,.\notag
\end{align}
\end{proposition}
\begin{proof}
To obtain the inequality we estimate
\begin{align}
\tr \delta_{L^2}&=\tr \left(A_m^{-\frac14} (A_m^{\frac14}\delta_{L^2}A_m^{\frac14}) A_m^{-\frac14}\right)
\le\norm{A_m^{-\frac14}}^2\tr \left(A_m^{\frac14}\delta_{L^2}A_m^{\frac14}\right)
=4\frac{E_{\delta}}{m}\,,\label{massiveestimate}
\end{align}
using the fact that \(A_m\ge m^2\one>0\), which implies that \(A_m^{-s}\) is bounded when \(s\ge0\).

\(A_m^{-\frac12}\partial_i\) is bounded for each \(i=1,\ldots,n\), so it follows from our assumptions that the operators
\begin{align}
A_m^{\frac14}\delta_{11}A_m^{\frac14}&\notag\\
m^2A_m^{-\frac14}\delta_{00}A_m^{-\frac14}&=m^2A_m^{-\frac12}(A_m^{\frac14}\delta_{00}A_m^{\frac14})A_m^{-\frac12}
\notag\\
-\partial_iA_m^{-\frac14}\delta_{00}A_m^{-\frac14}\partial_i&=
(-\partial_iA_m^{-\frac12})(A_m^{\frac14}\delta_{00}A_m^{\frac14})(A_m^{-\frac12}\partial_i)\notag
\end{align}
are all positive and trace-class, where we view the partial derivatives as operators. The sum
\begin{eqnarray}
&\frac12\left(A_m^{\frac14}\delta_{11}A_m^{\frac14}
+m^2A_m^{-\frac14}\delta_{00}A_m^{-\frac14}
+\sum_{i=1}^n(-\partial_i)A_m^{-\frac14}\delta_{00}A_m^{-\frac14}\partial_i\right)
\end{eqnarray}
is then again trace class and by (4.9) its integral kernel is the expression in braces in equation (4.3) for the energy density
\(\epsilon_{\omega_{\delta},\omega_{\textup{vac}}}\). By our continuity assumption on this integral kernel and \cite[Corollary 3.2]{brislawn1991traceable} we then find
\begin{align}
\tilde{E}_{\delta}&:=
\int_{\Rbb^n}\epsilon_{\omega_{\delta},\omega_{\textup{vac}}}\dd^nx\notag\\
&=\frac14\tr_{L^2(\Rbb^n,\Rbb)}\! \left(A_m^{\frac14}\delta_{11}A_m^{\frac14}
+m^2A_m^{-\frac14}\delta_{00}A_m^{-\frac14}
+\sum_{i=1}^n(-\partial_i)A_m^{-\frac14}\delta_{00}A_m^{-\frac14}\partial_i\right)\,,\label{eqn:Edelta1}
\end{align}
Indeed, the results of \cite{brislawn1991traceable} allow us to weaken the continuity assumption somewhat, at the cost of additional technical complications.

To show that \(\tilde{E}_{\delta}=E_{\delta}\)
we need to use the cyclicity of the trace, taking special care, because some of the operators involved are unbounded. We compute
\begin{align}
\tr_{L^2(\Rbb^n,\Rbb)}\! \left(\sum_{i=1}^n(-\partial_i)A_m^{-\frac14}\delta_{00}A_m^{-\frac14}\partial_i\right)&=
\sum_{i=1}^n\tr_{L^2(\Rbb^n,\Rbb)}\! \left((-\partial_iA_m^{-\frac12})(A_m^{\frac14}\delta_{00}A_m^{\frac14})(A_m^{-\frac12}\partial_i)\right)\notag\\
&=\sum_{i=1}^n
\tr_{L^2(\Rbb^n,\Rbb)}\! \left((A_m^{-\frac12}(-\partial_i^2)A_m^{-\frac12})(A_m^{\frac14}\delta_{00}A_m^{\frac14})\right)\notag\\
&=\tr_{L^2(\Rbb^n,\Rbb)}\! \left((A_m^{-\frac12}(A_m-m^2)A_m^{-\frac12})(A_m^{\frac14}\delta_{00}A_m^{\frac14})\right)\notag\\
&=\tr_{L^2(\Rbb^n,\Rbb)}\! \left(A_m^{\frac14}\delta_{00}A_m^{\frac14}-m^2A_m^{-\frac14}\delta_{00}A_m^{-\frac14}\right)\,.\notag
\end{align}
Combining this identity with \eqref{eqn:Edelta1} yields 
\(\tilde{E}_{\delta}=\frac14\tr_{L^2(\Rbb^n,\Rbb)}\! \left(A_m^{\frac14}\delta_{11}A_m^{\frac14}+A_m^{\frac14}\delta_{00}A_m^{\frac14}
\right)=E_{\delta}\).
\end{proof}

Combining Proposition \ref{prop:Edelta} with the results of Section \ref{sec:perturbations} we find
\begin{theorem}\label{thm:freefieldestimates}
If \(m>0\) and \(\delta_{L^2}\) is a bounded, strictly positive operator on \(L^2(\Rbb^n,\Rbb)^{\oplus2}\) such that
\(E_{\delta}=\frac14\tr \left(  A_m^{\frac14}\delta_{L^2}A_m^{\frac14}\right)<\infty\),
then the Gaussian state \(\omega_{\delta}\) determined by \(\mu_{\delta}\) is quasi-equivalent to the Minkowski vacuum
\(\omega_{\textup{vac}}\) and we have
\begin{align}
\tr \abs{i\coth\left(\frac{K_{\delta}}{2}\right)
-R_{\textup{vac}}^{-1}}\quad&
\leq8\frac{E_\delta}{m}\, ,\label{QEestimate1freescalar}\\
\norm{\frac{1}{\cosh\left(\frac{K_{\delta}}{2}\right)}}_\textup{HS}^2\! &
\leq8\frac{E_\delta}{m}\,, \label{QEestimate3freescalar}\\  
\norm{\frac{1}{\sinh\left(\frac{K_{\delta}}{2}\right)}}_\textup{HS}^2
&\leq 16\frac{E_\delta}{m}\left( 1+8\frac{E_\delta}{m}\right)\,,\label{QEestimate2freescalar}
\end{align}
\end{theorem}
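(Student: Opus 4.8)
The plan is to obtain everything by feeding the hypotheses into the general machinery of Section~\ref{sec:perturbations}, exploiting the two special features of the Minkowski vacuum recorded in Section~\ref{ssec:globalvacuum}: $R_{\textup{vac}}^2=-\one$, which forces $\sqrt{\one+R_{\textup{vac}}^2}=0$ (so the reference-state terms in Corollary~\ref{cor:estimates} drop out), and the consequent invertibility $R_{\textup{vac}}^{-1}=-R_{\textup{vac}}$ with $\norm{R_{\textup{vac}}^{-1}}=1$.

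First I would translate the data into the language of Section~\ref{sec:perturbations}. Since $U_{\textup{vac}}$ is unitary, $\delta=U_{\textup{vac}}^*\delta_{L^2}U_{\textup{vac}}$ is a bounded positive operator on $\K_{\textup{vac},\Rbb}$ with $\tr\delta=\tr\delta_{L^2}$, and the assumption $E_{\delta}<\infty$ says exactly that the positive operator $A_m^{\frac14}\delta_{L^2}A_m^{\frac14}$ is trace-class, so Proposition~\ref{prop:Edelta} applies and yields $\tr\delta=\tr\delta_{L^2}\le 4E_{\delta}/m<\infty$. In particular $\delta\ge0$ and $\tr\delta<\infty$, whence $\one+\delta\ge\one$ is bounded with bounded inverse (condition~\eqref{condition1}) and, by Powers--St\o rmer exactly as in the paragraph preceding Lemma~\ref{lem:Restimate}, condition~\eqref{condition2} holds as well; by the Araki--Yamagami criterion $\omega_{\delta}$ is therefore quasi-equivalent to $\omega_{\textup{vac}}$. (One cannot short-cut this through Corollary~\ref{Longotheorem2}, whose hypothesis $\one+R_1^2>0$ fails for the vacuum, but this is harmless.)

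Next I would check that $\omega_{\delta}$ admits a modular Hamiltonian, which is the extra ingredient Corollary~\ref{cor:estimates} needs. Since $\norm{\Sigma_0}\le1$ we have $\one+\Sigma_0\ge0$, and strict positivity of $\delta_{L^2}$ (hence of $\delta$) gives $\one+\delta+\Sigma_0>0$, equivalently $\one+\Sigma_{\delta}>0$, i.e.\ $\one+R_{\delta}^2>0$ -- this is precisely the sufficient condition $\delta>0$ recorded in Section~\ref{sec:perturbations}, so $K_{\delta}=-\log\Delta_{\delta}$ is well defined even though $K_{\textup{vac}}$ is not. The remaining hypotheses of Corollary~\ref{cor:estimates} with $\mu_0=\mu_{\textup{vac}}$ are met: $\delta\ge0$, $\tr\delta<\infty$, and $R_0^{-1}=R_{\textup{vac}}^{-1}$ bounded with norm $1$.

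Finally I would substitute into Corollary~\ref{cor:estimates}. Using $\sqrt{\one+R_{\textup{vac}}^2}=0$, $R_{\textup{vac}}^{-1}\sqrt{\one+R_{\textup{vac}}^2}=0$ and $\norm{R_{\textup{vac}}^{-1}}=1$, its three bounds collapse to $\norm{\cosh(K_{\delta}/2)^{-1}}_{\textup{HS}}^2\le 2\tr\delta$, $\tr\abs{i\coth(K_{\delta}/2)-R_{\textup{vac}}^{-1}}\le 2\tr\delta$ and $\norm{\sinh(K_{\delta}/2)^{-1}}_{\textup{HS}}^2\le 4\tr\delta+8(\tr\delta)^2$; inserting $\tr\delta\le 4E_{\delta}/m$ gives \eqref{QEestimate3freescalar} and \eqref{QEestimate1freescalar}, and the arithmetic $4(4E_{\delta}/m)+8(4E_{\delta}/m)^2=16\tfrac{E_{\delta}}{m}\bigl(1+8\tfrac{E_{\delta}}{m}\bigr)$ gives \eqref{QEestimate2freescalar}. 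The only point demanding care is the bookkeeping of the unitaries $U_{\textup{vac}}$ and $V_{\delta}$ identifying $\K_{\textup{vac}}$, $\K_{\delta}$ and $L^2(\Rbb^n)^{\oplus2}$, together with confirming $\one+R_{\delta}^2>0$ so that the hyperbolic-function identities of Corollary~\ref{cor:estimates} are valid; there is no genuine obstacle, since once the reference-state functions vanish the corollary's estimates bound the perturbed quantities directly.
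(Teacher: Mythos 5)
Your proposal is correct and takes essentially the same route as the paper, which likewise deduces the well-definedness of \(K_{\delta}\) from strict positivity and trace-class-ness of \(\delta_{L^2}\) and then reads the three bounds off Corollary~\ref{cor:estimates} and Proposition~\ref{prop:Edelta} using \(R_{\textup{vac}}^{-1}=-R_{\textup{vac}}\), \(\norm{R_{\textup{vac}}^{-1}}=1\) and \(\sqrt{\one+R_{\textup{vac}}^2}=0\). You merely spell out details the paper leaves implicit (the Araki--Yamagami verification of quasi-equivalence and the arithmetic with \(\tr\delta\le 4E_{\delta}/m\)), all of which check out.
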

\begin{proof}
Note that \(K_{\delta}\) is well-defined, as \(\delta_{L^2}\) is strictly positive and trace-class. The proof then follows immediately from Corollary~\ref{cor:estimates} and Proposition \ref{prop:Edelta} together with the fact that \(R_{\text{vac}}^{-1}=-R_{\text{vac}}\), which implies \(\norm{R_{\text{vac}}^{-1}}=1\) and \(\sqrt{\one+R_{\text{vac}}^2}=0\).
\end{proof}

Note that a different approach to obtain inequalities involving modular operators and energies has been pursued in \cite{MuchPasseggerVerch2022}.

\subsection{Thermal states}\label{ssec:thermal}

As a further example, we consider thermal (KMS) states at inverse temperature \(\beta>0\) for a scalar field with mass \(m>0\).
We will assume that \(\mathcal{C}\) is compact, to ensure that all thermal states are quasi-equivalent
\cite{verch1994local}. Note that \(A_m\) has a pure point spectrum under this assumption (see e.g. \cite{JostGeometricAnalysis,McKeanSinger1967}).

Gaussian Thermal states are given by the inner product
\begin{align}
\mu_{\beta}\left( f,g\right)&=\left\langle f,X \coth\left( \frac{\beta}{2}A_m^{\frac12}\right)X\, g\right\rangle_{L^2\left( \mathcal{C}, \Rbb\right)^{\oplus 2}}\label{thermalscalarproduct}
\end{align}
with \(X\) as in equation \eqref{eqn:defX}. Using the identity \(\coth\left( \frac{\beta}{2} x\right)=1+\frac{2}{e^{\beta x}-1}\) for all \(x\not=0\) we can view thermal states as perturbations of the vacuum in the context of Section~\ref{sec:perturbations} with the operator \(\delta_{L^2}=\delta_\beta\) given by
\begin{align}
\delta_\beta&=\frac{2}{e^{\beta A_m^{\frac12}}-\one}\label{deltabeta}
\end{align}
as a diagonal \(2\times2\)-matrix. Notice that \(\delta_{\beta}>0\) is strictly positive (because \(A_m\) is) and trace-class, due to Weyl's asymptotic formula, cf. \cite{,JostGeometricAnalysis,McKeanSinger1967}. For similar reasons,
\(A_m^{\frac14}\delta_{\beta} A_m^{\frac14}=\frac{2A_m^{\frac12}}{e^{\beta A_m^{\frac12}}-\one}\) is also trace-class. We have
\begin{equation}\label{Sigmabeta}
\begin{aligned}
\Sigma_{\beta}&:=V_{\delta_{\beta}}\Sigma_{\delta_{\beta}}V_{\delta_{\beta}}^*\\
&=\left( \one+\delta_{\beta}\right)^{-\frac12}\Sigma_{\text{vac}}
\left( \one+\delta_{\beta}\right)^{-\frac12}\\
&=\tanh\left( \frac{\beta}{2} A_m^{\frac12}\right)\Sigma_{\text{vac}}
\end{aligned}
\end{equation}
with \(\Sigma_{\text{vac}}=\begin{pmatrix} 0 & -i\one\\ i\one & 0\end{pmatrix}\).
We will also write \(R_{\beta}=-i\Sigma_{\beta}\) and for the modular Hamiltonian we find
\begin{align}
K_{\beta}&:=V_{\delta_{\beta}}K_{\delta_{\beta}}V_{\delta_{\beta}}^*
=2\artanh\left(\Sigma_\beta\right) =\beta A_m^{\frac12}\Sigma_{\text{vac}}\label{Kbeta}
\end{align}
using \eqref{deltaRidentification} and diagonalising the \(2\times2\)-matrix \(\Sigma_{\text{vac}}\) using the unitary
\(\frac{1}{\sqrt{2}}\begin{pmatrix}-i\one&i\one\\ \one&\one\end{pmatrix}\).

We now examine the estimates of Theorem \ref{QEestimatestheorem} in this specific setting, where more explicit computations are possible.
For example, from equation~\eqref{Sigmabeta} we see that
\begin{align}
R_{\beta}^{-1}-R_{\text{vac}}^{-1}&=
i\left(\one-\coth\left( \frac{\beta}{2} A_m^{\frac12}\right)\right)\Sigma_{\text{vac}}\notag
\end{align}
and it follows that
\begin{align}
\tr\abs{R_{\beta}^{-1}-R_{\text{vac}}^{-1}} &=\tr\abs{i\left(\one-\coth\left( \frac{\beta}{2} A_m^{\frac12}\right)\right)\Sigma_{\text{vac}}}\notag\\
&=\tr \left(\coth\left( \frac{\beta}{2} A_m^{\frac12}\right)-\one\right)\notag\\
&=
\tr \delta_{\beta}
\,,\notag
\end{align}
where we recall that the trace is over \(L^2\left( \mathcal{C}, \mathbb{C}\right)\). This should be compared with the more general estimate \eqref{QEcondition1estimate}, which has an additional factor 2 on the right-hand side. Similarly, for estimate~\eqref{QEcondition3estimate},
\begin{align}
\norm{\sqrt{\one+R_{\beta}^2}-\sqrt{\one+R_{\text{vac}}^2}}_\text{HS}^2
&=\norm{\sqrt{\one+R_{\beta}^2}}_\text{HS}^2\notag\\
&=\tr\abs{\one+R_{\beta}^2}\notag\\
&=\tr\left(\tanh\left( \frac{\beta}{2} A_m^{\frac12}\right)\left(\one+\tanh\left( \frac{\beta}{2} A_m^{\frac12}\right)\right)\delta_{\beta}\right)\notag
\\
&<2\tr\delta_{\beta}\notag
\end{align}
where we used \(\delta_{\beta}=\coth\left(\frac{\beta}{2}A_m^{\frac12}\right)-\one\) and 
\(\|\tanh\left( \frac{\beta}{2} A_m^{\frac12}\right)\|\le1\). Here the inequality is strict, since
\(1-\tanh^2\left(\frac{x}{2} \right)<2\frac{2}{e^{x}-1}\), for all \(x>0\). However, as \(\beta\to\infty\) the left-hand side converges to \(2\tr\delta_{\beta}\).
Finally, regarding estimate~\eqref{QEcondition2estimate},
\begin{align}
\Big\|R_{\beta}^{-1}\sqrt{\one+R_{\beta}^2}-R_{\text{vac}}^{-1}\sqrt{\one+R_{\text{vac}}^2}\Big\|_\text{HS}^2
&=\norm{R_{\beta}^{-1}\sqrt{\one+R_{\beta}^2}}_\text{HS}^2\notag\\
&=\tr\left(-R_{\beta}^{-2}-\one\right)\notag\\
&=\tr\left(\coth^2\left(\frac{\beta}{2}A_m^{\frac12}\right)-\one\right)\notag\\
&=\tr\delta_{\beta}\left(\delta_{\beta}+2\right)\notag\\
&\le 2\tr\delta_{\beta}
+\left(\tr\delta_{\beta}\right)^2\notag
\end{align}
where we used \(\tr(A^2)\le\|A\|\tr(A)\le(\tr(A))^2\) for \(A\ge0\). This estimate has better constants than the general form \eqref{QEcondition2estimate}.

\subsection{Local perturbations of the vacuum}

To conclude this section, we will briefly comment on states on a bounded open region \(\O\subset\Rbb^n\) in an inertial time slice of Minkowski space. Here, two issues arise that make the situation noticeably more complicated than in Sections \ref{ssec:globalvacuum} and \ref{ssec:thermal}.

Due to the Reeh-Schlieder Theorem (see e.g. \cite{haag2012local}), the one-particle Hilbert space \(\H_0\) of the Minkowski vacuum remains the same, even if we restrict the theory to the bounded open region \(\O\). The same is not true, however, for the spaces \(\K_{0,\Rbb}\) and \(\K_0\). In general, we find for the region \(\O\) a subspace \(\K^{(\O)}_{0,\Rbb}\subset\K_{0,\Rbb}\), where the inner product on the symplectic space \(D^{(\O)}\) is given by the fractional operators \(A_m^{\pm\frac12}|_{\O}\), which are defined as quadratic forms on \(L^2(\O)\) essentially by restricting the integral kernels of these operators to \(\O\times\O\). It is important to note that the operators \(A_m^{\pm\frac12}|_{\O}\) are not each other's inverses.

The change in the inner product also leads to a change in the polarisation operator \(R_0\). Because \(R_0\) does not preserve the subspace \(\K^{(\O)}_{0,\Rbb}\), the desired polarisation operator \(R_0^{(\O)}\) is not the restriction of \(R_0\) to this subspace. Instead, we find a more complicated expression and, more importantly, we no longer expect to have \(R_0^2=-\one\), because the vacuum restricted to a bounded region is no longer a pure state. Indeed, we generally cannot expect that \(\|(R_0^{(\O)})^{-1}\|<\infty\), so some of our general estimates from Section \ref{sec:perturbations} no longer apply. E.g., it is well-known that a massless free scalar field on the open unit ball can be conformally mapped to a wedge region, where the Bisognano-Wichmann Theorem tells us that the modular flow is given by Lorentz boosts \cite{HislopLongo1982,BisognanoWichmann1976}. The generator of this flow has spectrum \(\Rbb\), which means that the same is true for the modular Hamiltonian of the ball \cite{LongoMorsella2023}. Hence, the spectrum of \(R_0^{(\O)}\) is \([-1,1]\) (with \(1\) and \(-1\) not in the point spectrum) and its inverse is unbounded.

This issue can be avoided if we only consider estimates that do not involve \(\|(R_0^{(\O)})^{-1}\|\). However, even in that case there are still difficulties expressing the upper bounds of our estimates in terms of the energy of a state. Indeed, the change in the inner product also renders our proof of Proposition \ref{prop:Edelta} invalid. The extension of our estimates of Theorem \ref{thm:freefieldestimates} to bounded regions will therefore require further investigation.

\bigskip

\textbf{Acknowledgements}

K.S. thanks Rainer Verch for an enlightening discussion of an approximate local modular quantum energy inequality. We also thank an anonymous referee for carefully checking the estimates and computations in our paper and for providing helpful suggestions. K.S. gratefully acknowledges the support by the Heisenberg Programme of the Deutsche Forschungsgemeinschaft (DFG) through the project "Mathematical Features of the Stress Tensor in Quantum Field Theory in Curved Spacetime" (SA 3220/1-1).

A.C. is affiliated with GNFM–INDAM (Istituto Nazionale di Alta Matematica) and acknowledges financial support under the National Recovery and Resilience Plan (NRRP), Mission 4, Component 2, Investment I.4.1, D.M. n.351, by the Italian Ministry of University and Research (MUR), funded by the European Union – NextGenerationEU.

\medskip

\textbf{Declarations}

\emph{Data availability statement:}
The authors declare that no data sets were used or created for this research.

\emph{Competing interests:} The authors have no competing interests to declare.

\printbibliography
	
\end{document}